\documentclass[a4paper, USenglish,cleveref, thm-restate]{lipics-v2021}

\pdfoutput=1 %
\hideLIPIcs  %

\bibliographystyle{plainurl}%

\title{Optimal antimatroid sorting}

\author{Benjamin Aram Berendsohn}{Max Planck Institute for Informatics, Saarbrücken, Germany}{benjamin.berendsohn@fu-berlin.de}{https://orcid.org/0000-0002-3430-5262}{}

\authorrunning{B.\,A. Berendsohn}

\Copyright{Benjamin Aram Berendsohn}

\ccsdesc[500]{Theory of computation~Design and analysis of algorithms}

\keywords{sorting, working-set heap, greedy, antimatroid} %

\category{} %

\relatedversion{} %

\nolinenumbers %

\EventEditors{Anne Benoit, Haim Kaplan, Sebastian Wild, and Grzegorz Herman}
\EventNoEds{4}
\EventLongTitle{33rd Annual European Symposium on Algorithms (ESA 2025)}
\EventShortTitle{ESA 2025}
\EventAcronym{ESA}
\EventYear{2025}
\EventDate{September 15--17, 2025}
\EventLocation{Warsaw, Poland}
\EventLogo{}
\SeriesVolume{351}
\ArticleNo{103}

\usepackage{algorithm}
\usepackage[noEnd,indLines]{algpseudocodex}
\usepackage{cite}
\usepackage{framed}
\usepackage{ifthen}
\usepackage{mathtools}
\usepackage{tikz}

\usepackage{cleveref}

\theoremstyle{definition}
\newcommand{\R}{\mathbb{R}}
\newcommand{\N}{\mathbb{N}}
\newcommand{\fO}{\mathcal{O}}
\newcommand{\gminus}{-} %

\newcommand{\tctotal}{t^{\mathrm{total}}}
\newcommand{\tq}{t^{\mathrm{Q}}} %
\newcommand{\afunc}[1]{\Call{#1}{}}

\algrenewcommand\textproc{\normalfont\texttt} %

\makeatletter
\algnewcommand\algorithmicforeach{\textbf{for each}}
\algdef{S}[FOR]{ForEach}[1]{%
	\algpx@startCodeCommand\algpx@startIndent\algorithmicforeach\ #1\ \algorithmicdo%
}
\pretocmd{\ForEach}{\algpx@endCodeCommand}{}{}
\makeatother

\newcommand{\cdsinit}{\afunc{init}}
\newcommand{\cdsstep}{\afunc{step}}
\DeclareMathOperator{\Av}{Av} %

\newcommand{\emptyword}{\varepsilon}
\newcommand{\support}[1]{\tilde{#1}}
\newcommand{\swords}[1]{#1_{\mathrm{s}}^*}
\newcommand{\perms}{\mathcal{P}} %
\newcommand{\Lang}{\mathcal{L}}

\newcommand{\topos}{\mathcal{T}} %
\DeclareMathOperator{\Cont}{Cont}

\newcommand{\AMVS}{A^{\mathrm{VS}}} %
\newcommand{\AMSVP}{A^{\mathrm{SVP}}} %

\begin{document}
	\maketitle
	\begin{abstract}
		The classical \emph{comparison-based sorting} problem asks us to find the underlying total order of a given set of elements, where we can only access the elements via comparisons. In this paper, we study a \emph{restricted} version, where, as a hint, a set $T$ of possible total orders is given, usually in some compressed form.

		Recently, an algorithm called \emph{topological heapsort} with optimal running time was found for the case where $T$ is the set of topological orderings of a given directed acyclic graph, or, equivalently, $T$~is the set of linear extensions of a given partial order [Haeupler et al.\ 2024]. We show that a simple generalization of topological heapsort is applicable to a much broader class of restricted sorting problems, where $T$ corresponds to a given \emph{antimatroid}.
		
		As a consequence, we obtain optimal algorithms for the following restricted sorting problems, where the allowed total orders are \dots
		\begin{itemize}
			\item \dots restricted by a given set of \emph{monotone precedence formulas};
			\item \dots the \emph{perfect elimination orders} of a given chordal graph; or
			\item \dots the possible \emph{vertex search orders} of a given connected rooted graph.
		\end{itemize}
	\end{abstract}
	
	\section{Introduction}\label{sec:intro}
	
	One of the fundamental problems in theoretical computer science is \emph{sorting}. It has a wide variety of applications in practice~\cite{Knuth1973} and is frequently used as an introduction to design and analysis of algorithms.

	Perhaps the most intensely studied variant is \emph{comparison-based sorting}, where one can only access the input elements via comparisons (otherwise, we assume the standard word RAM model). This excludes bucket sort, radix sort, and other integer-sorting techniques~\cite{FredmanWillard1993,HanThorup2002,Han2004}. The well-known \emph{information-theoretic bound (ITB)} states that comparison-based sorting requires $\log(n!) \approx n \log n$
	comparisons in the worst case, and many algorithms asymptotically matching this bound are known.
	
	In this paper, we study a restricted variant of comparison-based sorting.
	In addition to the $n$ elements to be sorted, we are give a set $T$ of possible total orders, with the guarantee that the actual order of the input is contained in $T$. Call this the \emph{$T$-sorting problem}. Note that $T$ may be very large compared to $n$, so it makes sense to consider variants of the problem where $T$ is given in some compressed form. Regardless of that, the ITB for the $T$-sorting problem is $\log |T|$, so perhaps the first question to ask is:
	Can we solve the $T$-sorting problem with only $\fO( \log |T| )$ comparisons?
	
	In turns out that the answer is \emph{yes} if $|T| \ge 2^{\Omega(n)}$. Fredman~\cite{Fredman1976} showed how to solve the $T$-sorting problem with $\fO( n + \log |T| )$ comparisons. This algorithm assumes that $T$ is given explicitly and is thus entirely impractical.
	Fredman also showed that there exist sets $T$ with $|T| \in \fO(n)$ where the $T$-sorting problem requires $\Omega(n)$ comparisons, which means the ITB is not always tight.

	Most later work on this problem has been focused on the special case where $T$ is the set of linear extensions of a partial order~$P$, given by its Hasse diagram. It is known that $\fO(\log |T|)$ comparisons are sufficient, that is, the ITB is tight (even when $\log T \le n$). After decades of research~\cite{Kislitsyn1968,Fredman1976,Linial1984,KahnKim1992,BrightwellFelsnerEtAl1995,CardinalFioriniEtAl2010,vanderHoogRutschmann2024a}, eventually, two algorithms were found that are simultaneously comparison-optimal and running-time optimal~\cite{HaeuplerHladikEtAl2025a,vanderHoogRotenbergEtAl2025}. Both algorithms actually solve the slightly more general \emph{DAG sorting} problem: Given is a directed acyclic graph $G$, and $T$ is taken as the set of topological orderings of $G$. Note that each DAG $G$ defines a partial order $P$ (though $G$ may contain some ``unnecessary'' edges), and $T$ is the set of linear extensions of $P$.
	
	Most important for our purposes is the \emph{topological heapsort} algorithm of Haeupler, Hladík, Iacono, Rozhon, Tarjan, and Tětek~\cite{HaeuplerHladikEtAl2025a}. This was the first DAG-sorting algorithm with optimal running time $\fO(|V(G)| + |E(G)| + \log |T| )$, though it makes $\Theta(|V(G)| + \log |T| )$ comparisons, which is non-optimal if $|T|$ is small. Achieving comparison optimality requires further modifications, which we discuss later.
	
	We briefly describe topological heapsort.
	Let $G$ be the given DAG. At the beginning, identify the set $S$ of sources (vertices of in-degree 0) of $G$, and insert them into a priority queue $Q$. In each following step, we first find and remove the minimum element $x$ from $Q$. Then, we remove $x$ from $G$. This may create several new sources, which we all insert into $Q$. Repeat until $Q$ is empty.
	It is easy to see that this algorithm is \emph{correct} if the underlying total order of the vertices is indeed a topological ordering of $G$.
	Haeupler et al.\ achieve optimal running time %
	by using a special priority queue with the so-called \emph{working-set property}.
	
	Let us now try to generalize topological heapsort to solve the general $T$-sorting problem. Suppose that instead of the graph $G$, we are given a black-box data structure $D$ that reports all elements that are the minimum of at least one ordering $\tau \in T$. After ``removing'' the first $k$ elements $x_1, x_2, \dots, x_k$, let $D$ report all elements $y$ such that $x_1 x_2 \dots x_k y$ is a prefix of some ordering $\tau \in T$ (the \emph{available} elements). Again, it is easy to see that topological heapsort with this data structure is \emph{correct}. But is it optimal?
	
	The main contribution of this paper (\cref{sec:main}) is to show that topological heapsort has optimal running time if $T$ corresponds to a class of structures called \emph{antimatroids}, which are a generalization of partial orders, when we disregard the running time for the data structure~$D$. %
	We additionally show that, for several possible representations of antimatroids, we can implement $D$ such that its total running time is linear in the size of the input, so the running time stays optimal (\cref{sec:app}). Then, we show how to additionally achieve comparison optimality (\cref{sec:comparisons}). At the end, we discuss several generalizations of antimatroids (\cref{sec:generalizations}) where topological heapsort is \emph{not} optimal.
	
	In the remainder of this section, we give an informal definition of antimatroids, and discuss some related work.
	
	\subparagraph{Antimatroids.}
	Fix a set $T$ of total orders, and consider the black-box data structure $D$ mentioned above. Recall that, for each sequence $x_1, x_2, \dots, x_k$ of $k \ge 0$ elements, $D$ tells us the set of \emph{available} elements $y$ such that $x_1 x_2 \dots x_k y$ is a prefix of some $\tau \in T$. It is clear that any set $T$ can be described in this way.
	Intuitively, $T$ is an \emph{antimatroid} if the following conditions apply. First, once an element becomes available, it stays available until it is chosen (availability is \emph{monotone}); and second, availability of an element can only depend on the \emph{set} of elements chosen so far (but not their order).
	
	The first condition can be nicely related to topological heapsort: It means that our priority queue $Q$ at no point contains unavailable elements.
	The second condition may seem more surprising, but it turns out to be also necessary for optimality of topological heapsort and tightness of the ITB (see \cref{sec:generalizations}).
	Antimatroids were first studied by Dilworth~\cite{Dilworth1940} in lattice theory. Later, they were found to be special cases of so-called \emph{greedoids}~\cite{KorteLovasz1981,KorteLovasz1984a}, which are structures admitting a certain greedy optimization procedure. A common alternative definition of antimatroids treats them as \emph{set systems}; this definition can be related to \emph{matroids} (hence the name), which are also special cases of greedoids. We refer to the surveys of Björner and Ziegler~\cite{BjoernerZiegler1992} and Korte, Schrader, and Lovász~\cite{KorteSchraderEtAl1991} for more.

	\subparagraph{Related work.}

	Optimal algorithms are known~\cite{HwangLin1972,BrownTarjan1979,Linial1984} for \emph{merging} two sorted lists; this corresponds to a partial order consisting of exactly two disjoint chains.
	Merging can be generalized to \emph{multiway merging}, where the input is split into any number of sorted lists; here, the ITB is known to be related to the \emph{Shannon entropy} of the list lengths. Multiway merging is important for pratical sorting algortihms that exploit \emph{runs} in a given unsorted sequence (see, e.g., Gelling, Nebel, Smith, and Wild~\cite{GellingNebelEtAl2023a}).
	
	A different special case of the $T$-sorting problem concerns the case where the set $T$ is the permutations that \emph{avoid} a certain pattern $\pi$.\footnote{A permutation $\tau$ of a set $X$ \emph{avoids} a permutation $\pi$ of $[k]$ if $\tau$ contains no subsequence that is order-isomorphic to $\pi$, according to some fixed total order on $X$. This restricted sorting problem is usually formulated in a different, but equivalent way where $X$ is given as a sequence that is known to avoid a pattern.}
	It is known that $T \le c_\pi^n$ for some constant $c_\pi$ depending on $\pi$, so the ITB is linear in this case. Opler~\cite{Opler2024} gave an optimal linear-time algorithm for the pattern-avoiding sorting problem, which works even when the pattern $\pi$ is unknown. We refer to his paper for more information on that problem.
	
	The \emph{$X+Y$-sorting problem} is a restricted sorting problem where, interestingly, the ITB is \emph{not} tight. Here, two sets $X, Y \subset \R$ of size $k$ each are given, and we need to sort the set $\{x+y \mid x \in X, y \in Y\}$ of size $n = k^2$. If only comparisons of the form $x + y < x' + y'$ with $x,x' \in X$, $y,y' \in Y$ are allowed, then this is a restricted sorting problem by our definition. It is known that the number of possible orderings is only $k^{\Theta(k)}$~\cite{Fredman1976}, which makes the ITB $\Omega(k \log k) = \Omega( \sqrt{n} \log n)$; however, Fredman~\cite{Fredman1976} showed that $\Omega(k^2)$ comparisons are needed. It is currently unknown if an algorithm with optimal running time $\Theta(k^2)$ exists.
		
	Finally, a question closely related to restricted sorting is finding \emph{balanced pairs}, which are comparisons that split the set of possible total orders into two parts of approximately equal size. There is a body of work on balanced pairs in partial orders~\cite{Kislitsyn1968,Fredman1976,Linial1984,BrightwellFelsnerEtAl1995,Brightwell1999}, and the concept has been generalized to antimatroids~\cite{Eppstein2014}.
	
	\section{Preliminaries}\label{sec:pre}

	All the following definitions are taken from Björner and Ziegler~\cite{BjoernerZiegler1992} with only slight changes.
	
	\subparagraph{Words and languages.}
	Let an \emph{alphabet} $\Sigma$ be a set of \emph{letters}. A \emph{word} on $\Sigma$ is a finite sequence of letters. We write words without commas as $\alpha = x_1 x_2 \dots x_n$, where $x_i \in \Sigma$, and we write $\alpha \beta$ for the concatenation of two words $\alpha$ and $\beta$. We denote the length of a word $\alpha$ by $|\alpha|$. The \emph{support} $\support{\alpha}$ of a word $\alpha$ is the set of letters contained in it, and the empty word is denoted by $\emptyword$.
	
	A word $\alpha$ is \emph{simple} if each letter occurs at most once in it (i.e., $|\support{\alpha}| = |\alpha|$), and $\alpha$ is a \emph{permutation} of $\Sigma$ if each letter occurs precisely once in it. The set of all simple words on $\Sigma$ is denoted by $\swords{\Sigma}$, and a \emph{simple language} is a subset $L \subseteq \swords{\Sigma}$. We denote the set of permutations contained in a simple language by $\perms(L)$, and we call $L$ \emph{full} if $|\perms(L)| > 0$.

	\subparagraph{Partial orders.}
	A partial order on a set $\Sigma$ is denoted with a capital letter, like~$P$, and we write $\prec_P$ for the actual partial order relation. %
	Total orders are treated as permutations of the underlying set $\Sigma$. Whenever necessary, we write $\prec_\pi$ for the total order relation corresponding to the permutation $\pi$, i.e., we have $x \prec_\pi y$ if $x$ precedes $y$ in $\pi$. An \emph{oracle} for a total order $\pi$ is a black-box function that answers queries of the form $x \prec_\pi y$ in constant time.

	\subparagraph{Antimatroids.}

	An \emph{antimatroid} (sometimes called \emph{antimatroid language}) is a simple language $A \subseteq \swords{\Sigma}$ that satisfies the following two axioms:
	
	\begin{enumerate}[(i)]
		\item For all $\alpha \in \swords{\Sigma}$ and $x \in \Sigma \setminus \support{\alpha}$, if $\alpha x \in A$, then $\alpha \in A$. ($A$ is closed under taking prefixes.)\label{prop:am:acc}
		\item For each $\alpha, \beta \in A$ with $\support{\alpha} \not\subseteq \support{\beta}$, there is some $x \in \support{\alpha} \setminus \support{\beta}$ such that $\beta x \in A$.\label{prop:am:exchange}
	\end{enumerate}
	
	Axiom (\ref{prop:am:acc}) is called \emph{accessibility} and implies that every word in $L$ can be \emph{built} by starting with $\emptyword$ and successively appending letters, without ever leaving $L$.
	Axiom (\ref{prop:am:exchange}) ensures the two ``availability properties'' of antimatroids mentioned in the introduction.

	In this paper, we only consider \emph{full} antimatroids (recall that this means $\perms(A) \neq \emptyset$).\footnote{Some authors define antimatroids to be always full~\cite{KorteSchraderEtAl1991}.} A~full antimatroid $A$ is completely determined by $\perms(A)$; in fact, $A$ is exactly the set of prefixes of $\perms(A)$.
	However, not every set of permutations can be described by an antimatroid; for example, any antimatroid on $\Sigma = \{x,y,z\}$ that contains $xyz$ and $zyx$ also contains $xzy$. %

	Throughout this paper, we use a different formalism, which is closer to the informal definition of antimatroids given in the introduction. For an alphabet $\Sigma$, a \emph{precedence function} for a letter $x \in \Sigma$ is a function $p_x \colon 2^{\Sigma \setminus \{x\}} \rightarrow \{0,1\}$ that is \emph{monotone}, i.e., we have $p_x(X) \le p_x(Y)$ if $X \subseteq Y \subseteq \Sigma \setminus \{x\}$. A \emph{monotone precedence system (MPS)} on $\Sigma$ is a collection $S = \{p_x\}_{x \in \Sigma}$ of precedence functions. The \emph{language} of $S$ is the set
	\[
		\Lang(S) = \{ x_1 x_2 \dots x_k \mid \forall i \in [k] : p_{x_i}(\{x_1, x_2, \dots, x_{i-1}\}) = 1 \}.
	\]
	
	We show in \cref{app:mps-char} that monotone precedence systems indeed characterize (full) antimatroids.
	In the following, we write $\perms(S) = \perms(\Lang(S))$ for convenience. %
	
	\begin{example}\label{ex:poset}
		Let $P$ be a partial order on a set $\Sigma$. Define  $S = \{p_x\}_{x\in \Sigma}$ with $p_x(X) = 1$ if and only if $\{ y \in \Sigma \mid y \prec_P x \} \subseteq X$. It is easy to see that $\perms(S)$ is exactly the set of linear extensions of $P$. Thus, antimatroids generalize partial orders.
	\end{example}
	
	\begin{example}
		Let $\Sigma = \{a,b,c\}$, and consider the MPS $S$ with $p_a(X) = p_b(X) = 1$ for all $X \subseteq \Sigma$, and $p_c(X) = 1$ iff $a \in X$ or $b \in X$. Then $\perms(S) = \{ abc, bac, acb, bca \}$. Since $\perms(S)$ contains both $acb$ and its reverse $bca$, but not all possible permutations, it is \emph{not} the set of linear extensions of any partial order. Thus, antimatroids \emph{strictly} generalize partial orders.
	\end{example}
	
	\subparagraph{Priority queues.} The central data structure of topological heapsort is the \emph{working-set priority queue}. This is a priority queue (like the well-known \emph{binary heap}~\cite{Williams1964}) with the so-called \emph{working-set property}. Informally, this property ensures that extracting an element is fast if that element has been inserted only shortly before. We omit the precise definitions here, since they are not needed for our proofs; see \cref{sec:wsprops} for more details.
	
	\section{Generalized topological heapsort}\label{sec:main}
	
	Let us now formally state the antimatroid-sorting meta-problem. Given is a set $\Sigma$, an antimatroid $A$ on $\Sigma$, and a comparison oracle for a permutation $\pi \in \perms(A)$. The task is to output $\pi$ (or, equivalently, output $\Sigma$ in sorted order according to the oracle comparisons $\prec_\pi$). Note the assumption $\pi \in \perms(A)$ in particular implies that $A$ is full.
	
	We call this a meta-problem since the precise representation of $A$ is unspecified. We discuss some possible representations in \cref{sec:app}. For now, we give a meta-algorithm for the problem (\cref{alg:ths}),
	using the characterization of $A$ as a monotone precedence system~$S$.

	\begin{algorithm}[H]
		\caption{Topological heapsort for antimatroids.}\label{alg:ths}
		\begin{algorithmic}
			\Statex \textbf{Input:} Set $\Sigma$, MPS $S = \{p_x\}_{x \in \Sigma}$, comparison oracle for some $\pi \in \perms(S)$.
			\State Initialize working-set priority queue $Q$ that contains each $x \in \Sigma$ with $p_x(\emptyset) = 1$
			\State $\alpha \gets \emptyword$
			\While{$Q$ is not empty}
				\State Delete the minimum from $Q$ and append it to $\alpha$
				\State For each $x \in \Sigma \setminus (\support{\alpha} \cup Q)$, add it to $Q$ if $p_x(\support{\alpha}) = 1$
			\EndWhile
			\State \Return $\alpha$
		\end{algorithmic}
	\end{algorithm}
	
	We can already show correctness of our meta-algorithm regardless of implementation details.
	
	\begin{lemma}
		Topological heapsort is correct for every input $(\Sigma, S, \pi)$.
	\end{lemma}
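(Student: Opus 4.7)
The plan is to establish, by induction on the number of completed iterations of the while loop, the invariant that after $k$ iterations, $\alpha = \pi_1 \pi_2 \dots \pi_k$ equals the length-$k$ prefix of $\pi$, and $Q$ equals exactly the set $\{ x \in \Sigma \setminus \support{\alpha} \mid p_x(\support{\alpha}) = 1\}$ of currently available, not-yet-chosen letters. The base case $k = 0$ follows directly from the initialization, since $p_x(\emptyset) = 1$ characterizes the initial contents of $Q$.

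For the inductive step, two things must be verified. First, I argue that $\pi_{k+1} \in Q$: because $\pi \in \perms(S)$, the prefix $\pi_1 \pi_2 \dots \pi_{k+1}$ belongs to $\Lang(S)$, which by definition of $\Lang(S)$ forces $p_{\pi_{k+1}}(\{\pi_1, \dots, \pi_k\}) = p_{\pi_{k+1}}(\support{\alpha}) = 1$. Second, $\pi_{k+1}$ is the $\prec_\pi$-minimum of $Q$: every other $y \in Q$ satisfies $y \notin \support{\alpha} = \{\pi_1, \dots, \pi_k\}$ and $y \neq \pi_{k+1}$, so $y = \pi_j$ for some $j > k+1$, giving $\pi_{k+1} \prec_\pi y$. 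Hence the deletion step correctly appends $\pi_{k+1}$ to $\alpha$.

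Next I check that after the update step $Q$ again matches the invariant for the new $\alpha' = \alpha \pi_{k+1}$. The update explicitly inserts every $x \notin \support{\alpha'} \cup Q$ with $p_x(\support{\alpha'}) = 1$, so the only thing to verify is that no element of the old $Q$ becomes spuriously present: any $y \in Q$ with $y \neq \pi_{k+1}$ satisfied $p_y(\support{\alpha}) = 1$, and monotonicity of $p_y$ together with $\support{\alpha} \subseteq \support{\alpha'}$ yields $p_y(\support{\alpha'}) = 1$, so $y$ should still be in $Q$. This is the single place where the antimatroid structure (specifically, monotonicity of the precedence functions) actually enters the argument, and I expect it to be the only nontrivial point of the proof.

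Finally, the loop cannot halt prematurely: as long as $k < |\Sigma|$, the element $\pi_{k+1}$ witnesses $Q \neq \emptyset$ by the argument above; and once $\alpha = \pi$, we have $\support{\alpha} = \Sigma$, so no element can be in $Q$ and the loop terminates with the correct output. Termination in at most $|\Sigma|$ iterations is immediate since each iteration strictly grows $\alpha$.
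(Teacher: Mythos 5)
Your proof is correct, and it follows the same inductive skeleton as the paper's (induction on iterations, showing the algorithm follows $\pi$). The difference is that you prove a \emph{stronger} invariant than the paper does: you establish that after $k$ iterations $Q$ is \emph{exactly} the set $\{ x \in \Sigma \setminus \support{\alpha} \mid p_x(\support{\alpha}) = 1\}$, whereas the paper only needs (and implicitly uses) the weaker fact that $x_{k+1} \in Q$ and $Q \cap \support{\alpha} = \emptyset$. This matters for one of your side remarks. You write that monotonicity of the $p_x$ is ``the single place where the antimatroid structure actually enters the argument.'' That is true \emph{for your invariant}: without monotonicity, stale elements could sit in $Q$ despite no longer being available, breaking the exact-equality claim. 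But it is not true for correctness of the algorithm itself. Once an element enters $Q$ it is never removed except by extract-min, and the extracted elements are precisely $x_1, \dots, x_k$ by induction; so $x_{k+1}$ is in $Q$ (it was added at latest at the end of iteration $k$, because $p_{x_{k+1}}(\support{\alpha}) = 1$ from $\pi \in \perms(S)$), and every other element of $Q$ is some $x_j$ with $j > k+1$ and hence $\prec_\pi$-larger. No monotonicity is used. The paper's proof exploits exactly this weaker invariant, which is why it never mentions monotonicity; your version buys a cleaner characterization of $Q$ at the cost of an extra hypothesis that the lemma as stated does not actually need.
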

	\begin{proof}
		Let $S = \{p_x\}_{x \in \Sigma}$ and $\pi = x_1, x_2, \dots, x_n$.
		We need to show that topological heapsort produces $\alpha = \pi$.
		
		Since $\pi \in \perms(S)$, we have $p_{x_i}( \{x_1, x_2, \dots, x_{i-1}\} ) = 1$ for each $i \in [n]$. In particular, this means that $p_{x_1}(\emptyset) = 1$, so $x_1$ is contained in $Q$ at the start. Since also $x_1 \prec_\pi x_i$ for all $i > 1$, the first element added to $\alpha$ is indeed $x_1$.
		Suppose now that we have $\alpha = x_1 x_2 \dots x_{i-1}$ at the start of some iteration. Since $p_{x_i}(\support{\alpha}) = 1$, we have $x_i \in Q$. Since $x_i \prec_\pi x_j$ for all $j > i$, the algorithm next appends $x_i$ to $\alpha$. By induction, the algorithm indeed returns $\pi$.
	\end{proof}

	Two parts of the algorithm are left unspecified: How to identify the initial elements in~$Q$, and the elements inserted in each iteration. We now define an abstract data structure for this task. A \emph{candidate data structure} $C$ for a monotone precedence system $S = \{p_x\}_{x \in \Sigma}$ maintains a set $X \subseteq \Sigma$, initially $X = \emptyset$, and supports the following two operations:
	\begin{itemize}
		\item $C.\cdsinit()$: Set $X \gets \emptyset$ and report all $x \in \Sigma$ with $p_x(\emptyset) = 1$.
		\item $C.\cdsstep(x)$: Given $x \in \Sigma \setminus X$ with $p_x(X) = 1$, report all $y \in \Sigma$ with $p_y(X) = 0$ and $p_y(X \cup \{x\}) = 1$. Then, add $x$ to $X$.
	\end{itemize}
	
	Observe that $C$ allows at most $|\Sigma|$ calls to $\cdsstep$ after $\cdsinit$, one for each $x \in \Sigma$. Further, parameters of $\cdsstep$ calls form a sequence $x_1 x_2 \dots x_k \in \Lang(S)$. The \emph{total time} for $C$, written $\tctotal(C)$, is the overall time for one call to $\cdsinit$ and a valid sequence of $\cdsstep$ calls, in the worst case.
	
	Clearly, any correct candidate data structure can be used to finish the implementation of topological heapsort. The actual implementation will heavily depend on the given representation of the antimatroid; see \cref{sec:app}.
	
	\subsection{Running time analysis}
	
	Since we want our analysis to be independent of the candidate data structure, we split the work done by topological heapsort into two essential parts.
	\begin{itemize}
		\item The total time $\tctotal(C)$ spend by the candidate data structure, as defined above.
		\item The time required to initialize $Q$ (given the initial elements), delete the minimum from~$Q$, and insert given elements into~$Q$ in each step. Call this the \emph{queue time} $\tq$.
	\end{itemize}
	
	Since the loop performs at most $n = |\Sigma|$ iterations, the running time of topological heapsort is $\tctotal(C) + \tq + \fO(n)$.
	
	Note that the queue time does \emph{not} depend on the representation of $S$. In the following, we show that $\tq$ is always $\fO(n + \log |\perms(S)|)$, by reducing to the special case where $S$ is a \emph{partial order}, which has been solved by Haeupler et al.~\cite{HaeuplerHladikEtAl2025a}. The number of comparisons can be as large as the queue time, which is not optimal if $\log(|\perms(S)|) \ll n$. In \cref{sec:comparisons}, we improve this bound to the optimal $\fO( \log |\perms(S)|)$ by modifying the algorithm.
	
	\subparagraph*{Partial orders.}
	
	Let $G$ be a directed acyclic graph with $n$ vertices and $m$ edges, and consider the set $\topos(G)$ of topological orderings of $G$. Clearly, $\topos(G)$ is the set of linear extensions of a partial order on $V(G)$.
	Thus, $\topos(G)$ is an also antimatroid in $V(G)$ (see \cref{ex:poset}).
	
	It is easy to implement a candidate data structure for $\topos(G)$ via the standard topological sorting algorithm~\cite{Kahn1962,Knuth1968}, with total running time $\fO(m+n)$.
	Haeupler et al.~\cite{HaeuplerHladikEtAl2025a} showed that the queue time of topological heapsort is optimal for partial orders:
	
	\begin{lemma}[Haeupler et al.~\cite{HaeuplerHladikEtAl2025a}]\label{p:poset-opt-time}
		Let $S$ be a monotone precedence system corresponding to a partial order on a set $\Sigma$, and let $\pi \in \perms(S)$ be a total order on $\Sigma$. Then, for topological heapsort with input $(\Sigma,S,\pi)$, we have $\tq \in \fO( |\Sigma| + \log |\perms(S)|)$.
	\end{lemma}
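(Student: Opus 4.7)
The plan is to reduce this statement to the analogous result of Haeupler et al.~\cite{HaeuplerHladikEtAl2025a} for topological heapsort on a directed acyclic graph, rather than reproving the queue-time bound from scratch. First, I would represent the partial order $P$ underlying $S$ as a DAG $G$ on vertex set $\Sigma$ (for concreteness, its Hasse diagram), so that the linear extensions of $P$ are precisely the topological orderings of $G$; in particular $\perms(S) = \topos(G)$ and $|V(G)| = |\Sigma|$.

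The central step is to verify that topological heapsort on the input $(\Sigma, S, \pi)$ and topological heapsort on $(G, \pi)$ execute the \emph{same} sequence of priority-queue operations. For a partial order, the precedence-function condition $p_x(\support{\alpha}) = 1$ unfolds to ``all $P$-predecessors of $x$ lie in $\support{\alpha}$'', which is exactly the condition that $x$ becomes a source in the subgraph $G - \support{\alpha}$. Consequently both algorithms initialize $Q$ with the same set, and at each extract-min they insert precisely the same newly available elements. Since $\tq$ is, by definition, independent of the candidate data structure and depends only on the sequence of insertions and extract-min operations on $Q$, the queue times in the two executions coincide.

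With this equivalence, the bound follows directly from the Haeupler et al.\ analysis of topological heapsort on DAGs, which gives queue time $\fO(|V(G)| + \log|\topos(G)|) = \fO(|\Sigma| + \log|\perms(S)|)$.

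I would not attempt to reprove that underlying bound here; it is the main technical contribution of \cite{HaeuplerHladikEtAl2025a}, and is the real obstacle behind the statement. It combines the working-set property of $Q$, which controls the amortized cost of each extract-min by the logarithm of a working-set number, with a combinatorial argument relating the sum of these log working-set numbers to $\log|\topos(G)|$ via an entropy-style count of available choices at each step. For the present lemma, however, the only new work is the short translation above between the MPS formulation and the DAG formulation.
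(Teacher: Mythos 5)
Your proposal is correct and matches the paper's treatment: the paper states this lemma as a direct citation of Haeupler et al.\ without giving its own proof, leaving the translation between the MPS and DAG formulations implicit (via \cref{ex:poset} and the surrounding discussion). Your explicit verification that the two formulations induce identical sequences of priority-queue operations, so that $\tq$ coincides, is exactly the unstated reduction the paper relies on, and deferring the DAG bound itself to~\cite{HaeuplerHladikEtAl2025a} is the right call.
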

	
	\subparagraph*{General antimatroids.}
	We now generalize \cref{p:poset-opt-time} to antimatroids. We need the following definition. The \emph{transcript} of a run of topological heapsort on an input $(\Sigma,S,\pi)$ is the sequence $(Q_0, Q_1, \dots, Q_n)$, where $Q_0$ is the set of elements initially in the priority queue, and $Q_i$ for $i \in [n]$ is the set of elements in the priority queue after the $i$-th step. Observe that, if $\pi = x_1 x_2 \dots x_n$, then $x_i \in Q_{i-1}$, and $x_i \notin Q_j$ for each $j \ge i$. See \cref{fig:transcript} for an example.
	
	\begin{figure}
		\newcommand{\verteq}{\rotatebox{90}{$\,=$}}
		\centering
		\begin{tikzpicture}[y=4mm, x=15mm, anchor=base]
			\newcommand\queue[2]{\node at (#1,0) {$Q_#1$}; \node at (#1,-1.2) {$\verteq$}; \node at (#1,-2) {#2};}
			\newcommand\x[2]{\node at (#1,0) {$x_#1$}; \node at (#1,.8) {$\verteq$}; \node at (#1,2) {#2};}
			\queue{0}{$\{a,c\}$}
			\queue{1}{$\{b,c\}$}
			\queue{2}{$\{c\}$}
			\queue{3}{$\{d,e\}$}
			\queue{4}{$\{e\}$}
			\queue{5}{$\emptyset$}
			
			\begin{scope}[yshift=8mm,shift={(-.5,0)}]
				\x{1}{$a$}
				\x{2}{$b$}
				\x{3}{$c$}
				\x{4}{$d$}
				\x{5}{$e$}
			\end{scope}
		\end{tikzpicture}
		\caption{An example transcript of topological heapsort for some antimatroid on $\Sigma = \{a,b,c,d,e\}$.}\label{fig:transcript}
	\end{figure}
	
	A crucial fact that we use in the proof below is that the queue time $\tq$ \emph{only} depends on the transcript of the run. In particular, two runs on inputs $(\Sigma,S,\pi)$ and $(\Sigma,S',\pi)$ that have the same transcript also have the same queue time, even if $S \neq S'$.
	
	\begin{theorem}\label{p:opt-queue-time}
		Let $S$ be a monotone precedence system on $\Sigma$, and let $\pi \in \perms(S)$ be a total order on $\Sigma$. Then, for topological heapsort with input $(\Sigma,S,\pi)$, we have $\tq \in \fO( |\Sigma| + \log |\perms(S)|$.
	\end{theorem}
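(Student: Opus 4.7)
The plan is to reduce to the partial-order case (\cref{p:poset-opt-time}) by exploiting the observation made just before the theorem: $\tq$ depends only on the transcript. It suffices to construct a partial order $P$ on $\Sigma$ such that (i)~$\pi \in \perms(P)$, (ii)~the transcript of topological heapsort on $(\Sigma, P, \pi)$ coincides with that on $(\Sigma, S, \pi)$, and (iii)~$|\perms(P)| \le |\perms(S)|$. Given such a $P$, the queue time for the $S$-run equals the queue time for the $P$-run, which by \cref{p:poset-opt-time} is $\fO(|\Sigma| + \log|\perms(P)|) \subseteq \fO(|\Sigma| + \log|\perms(S)|)$.

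Write $\pi = x_1 x_2 \cdots x_n$. For each $i$, let $t_i \in \{0, 1, \dots, i-1\}$ denote the step at which $x_i$ first enters $Q$ in the $S$-run, setting $t_i = 0$ when $x_i \in Q_0$. I define $P$ by declaring $x_m \prec_P x_k$ whenever $m \le t_k$. Since $t_k < k$ when $t_k > 0$, the relation is irreflexive, and transitivity follows because $\ell \le t_m < m \le t_k$ implies $\ell \le t_k$; thus $P$ is a strict partial order and $\pi$ is a linear extension. To check (ii), observe that in the $P$-run the predecessors of $x_k$ are exactly $\{x_1, \dots, x_{t_k}\}$, which are removed at steps $1, \dots, t_k$ in $\pi$-order; hence $x_k$ first becomes available at step $t_k$ (or lies in the initial queue when $t_k = 0$), matching the $S$-transcript exactly. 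The extraction order is also the same because for each $i$, $x_i$ lies in the common set $Q_{i-1}$ and is the unique $\prec_\pi$-minimum therein.

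For (iii), I will prove the stronger claim $\perms(P) \subseteq \perms(S)$. Let $\sigma = y_1 \cdots y_n$ be any linear extension of $P$, fix $i$, and let $y_i = x_k$. Since $\sigma$ respects $P$ and every element of $\{x_1, \dots, x_{t_k}\}$ is a $\prec_P$-predecessor of $x_k$, we have $\{x_1, \dots, x_{t_k}\} \subseteq \{y_1, \dots, y_{i-1}\}$. Combined with the identity $p_{x_k}(\{x_1, \dots, x_{t_k}\}) = 1$, which holds by definition of $t_k$, monotonicity of $p_{x_k}$ yields $p_{x_k}(\{y_1, \dots, y_{i-1}\}) = 1$. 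Since this holds for all $i$, $\sigma \in \perms(S)$.

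The main subtlety is choosing $P$ correctly so that (iii) holds. A more naive choice, such as making only $x_{t_k}$ (and none of $x_1, \dots, x_{t_k - 1}$) a predecessor of $x_k$, still preserves the transcript required in (ii), but can make $|\perms(P)|$ strictly larger than $|\perms(S)|$: a linear extension of such a coarser $P$ may place $x_k$ right after $x_{t_k}$ while leaving out some earlier $x_j$ whose presence is needed to make $p_{x_k}$ evaluate to $1$. The crucial point is that monotonicity only lets us enlarge a trigger set, never shrink it, so I must declare all of $x_1, \dots, x_{t_k}$ as predecessors of $x_k$ to guarantee the inclusion used in the final monotonicity step.
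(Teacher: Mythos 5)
Your proof is correct and follows essentially the same route as the paper: you construct the identical auxiliary partial order $P$ (your condition $m \le t_k$ is a direct reformulation of the paper's $y \notin Q_0 \cup \dots \cup Q_{i-1}$), establish that the transcripts coincide, and show $\perms(P) \subseteq \perms(S)$ via monotonicity, then invoke \cref{p:poset-opt-time}. The closing remark about why a coarser predecessor set would break $\perms(P) \subseteq \perms(S)$ is a nice addition not in the paper, but the core argument is the same.
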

	\begin{proof}
		Consider a run of topological heapsort on $(\Sigma,S,\pi)$ with transcript $(Q_0, Q_1, \dots, Q_n)$, and let $\pi = x_1 x_2 \dots x_n$.
		
		We now define a partial order $P$ on $\Sigma$. For each $i \in [n]$, let $x_i \prec_P y$ for each $y \in \Sigma \setminus (Q_0 \cup Q_1 \cup \dots Q_{i-1})$. Informally, we have $x \prec_P y$ if $x$ is deleted from the priority queue before $y$ is inserted.
		Let $S_P$ be the monotone precedence system corresponding to $P$.
		Clearly, $\pi \in \perms(S_P)$. We claim the following:
		\begin{enumerate}[(i)]
			\item The transcript of running topological heapsort on $(\Sigma,S_P,\pi)$ is again $(Q_0, Q_1, \dots, Q_n)$.\label{claim:poset-equiv-trans}
			\item $\perms(S_P) \subseteq \perms(S)$.\label{claim:poset-sub}
		\end{enumerate}
		
		We first explain how this implies the theorem. By \cref{p:poset-opt-time}, running topological heapsort on $(\Sigma,S_P,\pi)$ has queue time $\fO( |\Sigma| + \log |\perms(S_P)| )$. By claim (\ref{claim:poset-equiv-trans}), running topological heapsort on $(\Sigma,S,\pi)$ has the same queue time (since the transcript is the same), which, by claim (\ref{claim:poset-sub}), is at most $\fO( |\Sigma| + \log |\perms(S)| )$, as required.
		
		We now prove claim (\ref{claim:poset-equiv-trans}). Let $(Q_0', Q_1', \dots, Q_n')$ be the transcript of running topological heapsort on $(\Sigma,S_P, \pi)$. Each $y \in Q_0$ is a minimal element of $P$ by definition, so $Q_0 \subseteq Q_0'$. On the other hand, each $y \notin Q_0$ satisfies $x_1 \prec_P y$ by definition and thus $y \notin Q_0'$, so $Q_0 = Q_0'$.

		Now take some $j \in [n]$ and assume that by induction, we have $Q_i = Q_i'$ for all $i < j$. For each $y \in Q_j$, we have $\{ z \in \Sigma \mid z \prec_P y \} \subseteq \{x_1, x_2, \dots, x_j\}$, and $y \notin \{x_1, x_2, \dots, x_j\}$, so $y \in Q_j'$ by definition.
		On the other hand, each $y \notin Q_j$ satisfies either (a) $y \in \{x_1, x_2, \dots, x_j\}$ or (b) $y \notin Q_0 \cup Q_1 \cup \dots \cup Q_j$. (a) directly implies $y \notin Q'_j$ ($y$ is removed earlier). (b) implies $x_{j+1} \prec_P y$, which again implies $y \notin Q'_j$.

		We finish the proof with claim (\ref{claim:poset-sub}). Let $\alpha = x_1' x_2' \dots x_n' \in \perms(S_P)$. We show that $\alpha \in \perms(S)$, i.e., that for each $i \in [n]$, we have $p_{x_i'}(\{x_1', x_2', \dots, x_{i-1}'\}) = 1$.
		
		Let $y \in \Sigma$, and let $k$ be minimal such that $y \in Q_k$. Let $X_k = \{x_1, x_2, \dots, x_k\}$. First, by definition of $Q_k$, we have $p_y(X_k) = 1$. Second, we have $\{x \in \Sigma \mid x \prec_P y\} = X_k$ by minimality of $k$.
		Now suppose $y = x'_i$, and let $X'_{i-1} = \{x'_1, x'_2, \dots, x'_{i-1}\}$. Then $p_y'(X'_{i-1}) = 1$, implying $X'_{i-1} \supseteq X_k$. By monotonicity, we have $p_y(X'_{i-1}) \ge p_y(X_k) = 1$, as desired.
	\end{proof}
		
	\subparagraph{Remark.}
	The technique of constructing an auxiliary partial order is inspired by the work of Haeupler, Hladík, Rozhoň, Tarjan, and Tětek~\cite{HaeuplerHlad'ikEtAl2024} (we discuss further connections in \cref{sec:dist-order}).
	In fact, the exact partial order $P$ in the proof of \cref{p:opt-queue-time} appears in the revised version\footnote{An earlier preprint~\cite{HaeuplerHladikEtAl2024} used a different proof technique.} of the paper that introduced topological heapsort~\cite{HaeuplerHladikEtAl2025a}. They observe that $P$ is an \emph{interval order}, which means $|\perms(P)|$ is relatively easy to bound and can be related to the working-set bound.
	
	The algorithm of van der Hoog, Rotenberg, and Rutschmann~\cite{vanderHoogRotenbergEtAl2025} is also analyzed with the help of interval orders, but seemingly cannot be easily generalized to antimatroids.
	
	\section{Antimatroid representations}\label{sec:app}

	In this section, we consider several representations of antimatroids (some only covering special cases), and discuss how to implement candidate data structures (CDSs) for these representations. If $C$ is a CDS for an animatroid $A$ with some representation, then we call $C$ \emph{efficient} if $\tctotal(C) \in \fO(n)$, where $n$ is the size of the representation (in machine words). \Cref{p:opt-queue-time} implies that topological heapsort with an efficient CDS has optimal running time.

	\subsection{Precedence formulas}\label{sec:app:formulae}
	
	We first consider an explicit representation of general full antimatroids based on monotone precedence systems. The \emph{precedence formula representation} of a full antimatroid $A$ on $\Sigma$ is a collection $\{F_x\}_{x \in \Sigma}$, where $F_x$ is a monotone boolean formula on the variable set $V_x = \Sigma \setminus \{x\}$. (A \emph{monotone} boolean formula allows only the operators $\vee$ and $\wedge$ and the constants 0 and 1, notably prohibiting the negation $\neg$.)
	
	Each formula $F_x$ represents a precedence function $p_x$ as follows: For $Y \subseteq \Sigma$, we let $p_x(Y)$ be the value of $F_x$ when setting each $y \in Y$ to~1, and each $y \in X \setminus (Y \cup \{x\})$ to~0.
	Since any monotone function $f$ can be represented as a monotone boolean formula, this representation is suitable for every antimatroid. In the following, we assume each formula is given as a linked list of symbols with each variable, operator, constant, or parenthesis in the formula occupying one machine word. Other representations, like parsing trees and circuits, are also possible.
	
	An CDS $C$ for $\{F_x\}_{x \in \Sigma}$ is implemented as follows. We need two preprocessing steps. First, for each formula, we simplify it. We repeatedly replace substrings $(1 \wedge x) \rightarrow x$, $(1 \vee x) \rightarrow 1$, $(0 \wedge x) \rightarrow 0$, and so on, in linear overall time. Afterwards, we have $F_x = 1$ for each $x$ with $p_x(\emptyset) = 1$.

	Second, compute a directed graph $G$, where $V(G) = \Sigma$, and $(x,y) \in E(G)$ if $F_y$ contains the variable $x$. For each such edge, we store a list of pointers to each occurrence of $x$ within~$F_y$. Note that the size of $G$ (including pointers) is at most the total size of all formulas.
	
	The initial candidate set reported by $\cdsinit()$ is the set of sources in $G$. $\cdsstep(x)$ is implemented as follows: For each out-neighbor $y$ of $x$, replace each occurrence of $x$ in $F_y$ with 1. Then, simplify $F_y$ as outlined above, and report $y$ if $F_y = 1$ after the simplification. Finally, remove $x$ from the graph.
	To see that the data structure is correct, observe that after calling $\cdsstep(x)$ on each $x$ in some set $X \subset \Sigma$, we have $F_y = 1$ for each $y$ with $p_y(X) = 1$.
	
	The total running time is linear, since each edge is visited once, and simplification is linear overall. Thus, the data structure is efficient, and so:
	\begin{theorem}
		Given an antimatroid $A$ on $\Sigma$ in a precedence formula representation of length~$n$, we can solve the $\perms(A)$-sorting problem in optimal time $\fO( n + \log |\perms(A)| )$.
	\end{theorem}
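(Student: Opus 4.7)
The plan is to package the candidate data structure (CDS) already sketched in the preceding paragraphs into a formal construction, verify correctness and total running time, and then invoke \cref{p:opt-queue-time}. Concretely, the CDS $C$ has two preprocessing phases and two operations. Phase 1 walks each linked-list formula once and applies the constant-shrinking rewrites $(1 \wedge \phi) \to \phi$, $(0 \vee \phi) \to \phi$, $(1 \vee \phi) \to 1$, $(0 \wedge \phi) \to 0$ bottom-up, using a token stack so that the total work is $\fO(n)$. Phase 2 constructs the directed graph $G$ on $\Sigma$ described above, together with, for every edge $(x,y)$, a list of pointers to the occurrences of $x$ in $F_y$; this also costs $\fO(n)$. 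The operation $\cdsinit$ outputs the sources of $G$, i.e.\ the $x$ with $F_x \equiv 1$ after phase 1. The operation $\cdsstep(x)$ walks the out-neighbors $y$ of $x$ in $G$, follows the pointer list to overwrite each occurrence of $x$ in $F_y$ by $1$, re-applies the shrink rules locally, reports $y$ if $F_y$ collapsed to the constant $1$, and finally deletes $x$ from $G$.

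Next, I would establish the invariant that, after $\cdsinit$ followed by a legal sequence $\cdsstep(x_1),\ldots,\cdsstep(x_k)$ with $x_1 \cdots x_k \in \Lang(S)$, every remaining formula $F_y$ represents the function $Z \mapsto p_y(\{x_1,\ldots,x_k\} \cup Z)$ restricted to $Z \subseteq \Sigma \setminus (\{x_1,\ldots,x_k\} \cup \{y\})$. The invariant is preserved because substituting $1$ for $x$ in $F_y$ exactly encodes the effect of adding $x$ to the set being queried, and the shrink rules preserve the Boolean function represented. In particular, at every point $F_y \equiv 1$ holds if and only if $p_y$ is already satisfied, so $\cdsstep(x)$ reports exactly those $y$ that newly become available, matching the specification.

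For the running time of $C$, I would use an amortized token-counting argument. Each token in the original formulas is created once during reading, touched a constant number of times during phase 1, and then either (i) remains in some $F_y$ until $\cdsstep$ substitutes it away and deletes it, or (ii) is erased by a shrink rule triggered immediately above such a substitution. The pointer lists ensure that substitutions are paid for by the tokens they remove, and every local re-simplification after a substitution is charged to the tokens it deletes; no token can be charged more than $\fO(1)$ times. Summed with the graph traversal work, which is $\fO(|V(G)|+|E(G)|)=\fO(n)$, this yields $\tctotal(C) \in \fO(n)$, so $C$ is efficient.

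Finally, combining gives the bound: the main loop performs $|\Sigma| \le n$ iterations, the queue time satisfies $\tq \in \fO(|\Sigma| + \log |\perms(A)|)$ by \cref{p:opt-queue-time}, and $\tctotal(C) \in \fO(n)$, so the overall running time is $\fO(n + \log |\perms(A)|)$, matching the information-theoretic lower bound. The only delicate point I expect is the amortized accounting for the local re-simplifications after substitution; implementing each formula as a doubly linked token list with parent pointers so that shrink rules propagate outward only as far as tokens actually disappear is the cleanest way to make this bookkeeping go through.
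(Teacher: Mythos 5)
Your proposal is correct and follows essentially the same approach as the paper: build the dependency graph $G$ with occurrence pointers, report the sources in $\cdsinit$, substitute $1$ for $x$ and re-simplify in $\cdsstep$, and then invoke \cref{p:opt-queue-time}. The paper states the correctness invariant and the linear-time bound more tersely ("each edge is visited once, and simplification is linear overall"), whereas you spell out the explicit function-representation invariant and the token-charging argument, but the underlying construction and analysis are the same.
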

	
	There are two standard representations of antimatroids that are similar to precedence formulas: \emph{Rooted set collections} and \emph{alternative precedence structures}~\cite{KorteSchraderEtAl1991}. These roughly correspond to precedence formulas in conjuctive normal form and disjunctive normal form, respectively. A third related characterization is discussed in the next section.
	
	\subsection{Elementary ranking conditions}%
	
	\emph{Optimality theory}~\cite{PrinceSmolensky2004} is a framework in linguistics where a so-called \emph{grammar} is modeled by a \emph{ranking} (permutation) of so-called \emph{constraints}. Several different constraint rankings may yield the same grammar. In particular, each grammar can be described by a set of \emph{elementary ranking conditions (ERCs)}~\cite{MerchantRiggle2015}. Each ERC is a tuple $(A,B)$ of constraint sets, and a constraint ranking $\alpha$ describes a given grammar if and only if for each ERC $(A,B)$, there exists a constraint $a \in A$ that outranks (precedes in $\alpha$) all constraints in $B$.
	
	Let us formally say that a permutation $\pi$ of $\Sigma$ is \emph{consistent} with an ERC $(A,B)$, with $A, B \subseteq \Sigma$ and $A \cap B = \emptyset$, if there exists some $a \in A$ such that $a \prec_\pi b$ for all $b \in B$. For a set $R$ of ERCs, let $\perms(R)$ be the set of permutations of $\Sigma$ that are consistent with each ERC in~$R$.
	Merchant and Riggle~\cite{MerchantRiggle2015} showed that $\perms(R)$ is the set of permutations of an antimatroid, and each antimatroid can be described by a set of ERCs. We give a short proof based on our formalism and previous observations. We start with a technical lemma.
	
	\begin{lemma}\label{p:erc-equiv}
		Let $\Sigma$ be an alphabet, let $(A,B)$ be an ERC, and let $\pi$ be a permutation of $\Sigma$. Then
		$\forall b \in B : \exists a \in A : a \prec_\pi b \iff \exists a \in A : \forall b \in B : a \prec_\pi b$.
	\end{lemma}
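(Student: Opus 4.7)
The plan is to dispatch this as a quantifier-swap statement that is normally one-directional but becomes an equivalence here because $\pi$ is a \emph{total} order on~$\Sigma$. The $(\Leftarrow)$ direction is the standard logical implication $\exists a\,\forall b\,\varphi(a,b) \Rightarrow \forall b\,\exists a\,\varphi(a,b)$ and requires no argument beyond naming the single witness from the left-hand side.

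For the $(\Rightarrow)$ direction the idea is to produce one global witness $a^\star \in A$ that dominates all of $B$ simultaneously. Assuming $B$ is non-empty (the empty case is vacuous, using only that the ERC has $A \neq \emptyset$), I would let $b^\star$ be the $\prec_\pi$-minimum element of $B$, which exists because $\pi$ totally orders the finite set~$B$. Applying the hypothesis to this particular $b^\star$ produces some $a^\star \in A$ with $a^\star \prec_\pi b^\star$. I claim this $a^\star$ works for every $b \in B$: by the minimality of $b^\star$ we have $b^\star \preceq_\pi b$, and combining with $a^\star \prec_\pi b^\star$ gives $a^\star \prec_\pi b$ by transitivity of $\prec_\pi$.

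There is no real obstacle; the only thing worth flagging is that the argument relies essentially on $\pi$ being a \emph{total} order, since that is what guarantees the existence of $b^\star = \min_{\prec_\pi} B$. If $\pi$ were merely a partial order, the minimum need not exist and the quantifier exchange would fail in general, which explains why the lemma is a genuine strengthening specific to the total-order setting in which ERCs are interpreted.
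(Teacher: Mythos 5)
Your proof is correct. The argument is a mirror image of the paper's: you take $b^\star = \min_\pi B$ and apply the hypothesis once to get a single global witness for the $\Rightarrow$ direction, whereas the paper takes $a^\star = \min_\pi A$ and observes that this $a^\star$ simultaneously witnesses or falsifies both sides, giving a unified two-case argument that covers both implications at once (and makes $\Leftarrow$ part of the same case split rather than a separate trivial step). The two choices are dual and exploit the same fact — a finite set totally ordered by $\prec_\pi$ has a minimum — so there is no real difference in difficulty or generality; the paper's version is marginally shorter because it never needs to invoke transitivity or name the trivial direction separately. Your closing remark correctly identifies totality (together with finiteness of $\Sigma$, hence of $B$) as the property doing the work, and you are right that the $B=\emptyset$ edge case is vacuous provided $A\neq\emptyset$, an assumption the paper's proof also makes implicitly when it writes $\min_\pi A$.
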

	\begin{proof}
		Let $a$ be minimal w.r.t.\ $\prec_\pi$ such that $a \in A$. If there is some $b \in B$ with $b \prec_\pi a$, then both sides of the equation are false. On the other hand, if $a \prec_\pi b$ for all $b \in B$, then both sides are true.
	\end{proof}
	
	\begin{proposition}[Merchant and Riggle~\cite{MerchantRiggle2015}]\label{p:ercs}
		Let $\Sigma$ be an alphabet. For each set $R$ of ERCs, there exists an antimatroid $A$ on $\Sigma$ such that $\perms(R) = \perms(A)$, and for each full antimatroid $A$, there exists a set of ERCs $R$ such that $\perms(R) = \perms(A)$.
	\end{proposition}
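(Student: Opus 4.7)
The plan is to pass through the monotone precedence system (MPS) formulation of antimatroids from \cref{sec:pre} in both directions, using \cref{p:erc-equiv} as the bridge between the ``$\exists a\, \forall b$'' form of an ERC and the ``$\forall b\, \exists a$'' condition that a collection of precedence functions naturally encodes.

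For the forward direction I would take the given set $R$ of ERCs and define an MPS $S = \{p_x\}_{x \in \Sigma}$ by setting $p_x(X) = 1$ iff every ERC $(A, B) \in R$ with $x \in B$ satisfies $A \cap X \neq \emptyset$. Monotonicity in $X$ is immediate, so $S$ is a valid MPS and hence describes an antimatroid. To match languages, I would unroll $\pi = x_1 \ldots x_n \in \perms(S)$ into the statement that for every ERC $(A, B) \in R$ and every $b \in B$ some $a \in A$ precedes $b$ in $\pi$, then invoke \cref{p:erc-equiv} to collapse this into the ``$\exists a\, \forall b$'' form, which is exactly $\pi \in \perms(R)$.

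For the reverse direction I would start from a full antimatroid given as an MPS $S = \{p_x\}_{x \in \Sigma}$. For each $x$ let $\mathcal{M}_x$ be the family of $\subseteq$-maximal sets $M \subseteq \Sigma \setminus \{x\}$ with $p_x(M) = 0$, and set $A_M := (\Sigma \setminus \{x\}) \setminus M$. I would take
\[
R \;=\; \{\, (A_M, \{x\}) : x \in \Sigma,\ M \in \mathcal{M}_x \,\}.
\]
Monotonicity of $p_x$ gives the clean equivalence $p_x(X) = 1 \iff X \cap A_M \neq \emptyset$ for every $M \in \mathcal{M}_x$; substituting this back into the definition of $\perms(S)$ matches $\perms(R)$ on the nose, and no appeal to \cref{p:erc-equiv} is even needed here since every $B$ is a singleton.

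The step that needs the most care is more a matter of hygiene than real difficulty: one has to check that the reverse construction never produces an empty ``$A$-side'', which would yield an unsatisfiable ERC and wreck fullness. The only way $A_M$ is empty is $M = \Sigma \setminus \{x\}$, i.e., $p_x \equiv 0$; fullness of the antimatroid rules this out, because some permutation in $\perms(S)$ must place $x$ somewhere, so $p_x$ takes the value $1$ on the corresponding prefix, and monotonicity lifts this to $p_x(\Sigma \setminus \{x\}) = 1$. Once this is verified, the whole argument is a CNF-style duality: $\mathcal{M}_x$ is the set of maximal $0$-sets of $p_x$, its complements are precisely the clauses, and each clause translates into a single ERC.
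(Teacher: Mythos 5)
Your forward direction is exactly the paper's construction (and the paper's chain of equalities, with \cref{p:erc-equiv} as the pivot), so there is nothing to add there. The genuine difference is in the reverse direction: the paper's proof is terse to the point of leaving that half implicit — after showing that the $R \mapsto S$ map always lands in the class of MPSs, it simply invokes \cref{p:mps-char} and declares the proof finished, without explaining why that map hits \emph{every} full antimatroid up to language equivalence. You supply the missing content: given an arbitrary MPS $S$, you take, for each $x$, the $\subseteq$-maximal zero-sets of $p_x$, complement them inside $\Sigma\setminus\{x\}$ to get the clauses $A_M$, and emit the ERCs $(A_M,\{x\})$. This is precisely the monotone-CNF representation of $p_x$, and because every $B$-side is a singleton, the $\exists\forall$ vs.\ $\forall\exists$ issue (\cref{p:erc-equiv}) evaporates. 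The equivalence $p_x(X)=1 \iff X\cap A_M\neq\emptyset$ for all $M\in\mathcal{M}_x$ is immediate from monotonicity, and substituting it into the definition of $\perms(S)$ gives $\perms(S)=\perms(R)$ directly. Your sanity check that $A_M\neq\emptyset$ is also the right place to use fullness — an empty $A$-side would be an unsatisfiable ERC — and the argument (some permutation in $\perms(S)$ places $x$ somewhere, hence $p_x$ is $1$ on that prefix, hence on $\Sigma\setminus\{x\}$ by monotonicity) is correct. In short, same approach, but yours is complete where the paper's reverse direction is asserted rather than proved.
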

	\begin{proof}
		Let $R$ be an ERC set. We have
		\begin{align*}
			\perms(R) & = \{ \pi \mid \forall (A,B) \in R: \exists a \in A: \forall b \in B: a \prec_\pi b \}\tag{def.}\\
			& = \{ \pi \mid \forall (A,B) \in R: \forall b \in B : \exists a \in A : a \prec_\pi b \}\tag{lem.~\ref{p:erc-equiv}}\\
			& = \{ \pi \mid \forall b \in \Sigma: \forall (A,B) \in R, b \in B: \exists a \in A : a \prec_\pi b \}
		\end{align*}
		
		Now define the MPS $S = \{p_x\}_{x \in \Sigma}$ such that $p_x(X) = 1$ if and only if
		\[ \forall (A,B) \in R, b \in B: \exists a \in A : a \in X. \]
		
		It is easy to check that each $p_x$ is indeed monotone, and that $\perms(R) = \perms(S)$. Since each full antimatroid can be characterized by an MPS (\cref{p:mps-char}), this finishes the proof.
	\end{proof}
	
	To construct a CDS for an ERC set, we can adapt the method used for topological orderings of graphs. For each element $x \in \Sigma$, maintain the number of ERCs $(A,B)$ with $x \in B$; an element becomes available when this number becomes zero. Whenever $\cdsstep(x)$ is called, remove each ERC $(A,B)$ with $x \in A$. Thus, given an ERC set $E$, we can solve the $\perms(E)$-sorting problem in optimal time.

	Precedence formulas and ERC sets are powerful enough to describe arbitrary antimatroids. In the following, we discuss two more representations that only apply to subclasses of antimatroids.
	
	\subsection{Vertex search and distance orderings}\label{sec:dist-order}

	Given a connected graph $G$ and a designated root $r \in V(G)$, the \emph{vertex search antimatroid} on $(G,r)$, written $\AMVS_{G,r}$, contains the empty word $\emptyword$ as well as each simple word $v_1 v_2 \dots v_k$ where $v_1 = r$ and each prefix $v_1 v_2 \dots v_i$ induces a connected subgraph of $G$. The words in $\AMVS_{G,r}$ model each way of traversing the graph in a manner similar to BFS or DFS. The basic idea is easily extended to directed graphs; here we require that each prefix $v_1 v_2 \dots v_i$ induces a subgraph of $G$ where every vertex is reachable from $r = v_1$. (We assume that each vertex is reachable from $r$ in $G$.)
	
	It is easy to see that $\AMVS_{G,r}$ is indeed an antimatroid;
	for example, as a precedence formula for each vertex $v \neq r$, take $F_v = u_1 \vee u_2 \vee \dots \vee u_k$, where $u_1, u_2, \dots, u_k$ are the neighbors (resp.\ in-neighbors) of $v$.
	However, not every antimatroid is a vertex search antimatroid.

	Again, the restricted sorting problem for vertex search antimatroids is solved in optimal time with the precedence formula representation given above.
	
	The vertex search antimatroid plays a role in the \emph{distance ordering} problem, a variant of single-source shortest path (SSSP) problem. Given is an edge-weighted directed graph $(G,w)$ with a designated root $r$, and the task is to order the vertices by distance from $r$. Recently, Haeupler et al.~\cite{HaeuplerHlad'ikEtAl2024} showed that Dijkstra's classical SSSP algorithm is \emph{universally optimal} when equipped with a certain working-set heap. Universally optimal means, informally, that for every fixed graph, the algorithm is worst-case optimal w.r.t.\ the weight function. In fact, they essentially match the information-theoretic lower bound with a running time of $\fO( \log |T| + |V(G)|+|E(G)|)$, where $T$ is the set of possible distance orderings for the input graph $G$ with root $r$, with any weight function.
	
	We now demonstrate a strong connection between antimatroid sorting and the distance ordering problem.\footnote{Such a connection is not surprising, since the technique used in the analysis of Dijkstra's algorithm~\cite{HaeuplerHlad'ikEtAl2024} inspired the original analysis of topological heapsort~\cite{HaeuplerHladikEtAl2025a} and large parts of this paper.}
	First, the set $T$ of possible distance orderings is precisely the vertex search antimatroid; we prove this in \cref{app:dijkstra}. Thus, we can run topological heapsort to find a distance ordering with optimal running time -- \emph{if} we are allowed to directly compare two vertices. That means comparing the distances of two vertices to $r$, which we cannot do in the distance ordering problem (without computing the distances first).

	Still, it can be seen that the \emph{transcript} (see the proof of \cref{p:opt-queue-time}) of a run of Dijsktra's algorithm on an input $(G,w,r)$ is the same as running topological heapsort on the corresponding vertex search antimatroid. This gives an alternative proof of universal optimality; see \cref{sec:dijkstra} for more details.

	\subsection{Elimination orderings}\label{sec:peo}
		
	An undirected graph is \emph{chordal} if has no induced cycles of length four or more. It is well-known that chordal graphs can be characterized by the existence of \emph{perfect elimination orderings (PEOs)}. A PEO is obtained by successively removing \emph{simplicial} vertices, that is, vertices whose neighborhood form a clique.
	
	The set of elimination orderings form an antimatroid, also known as the \emph{simplicial vertex pruning}~\cite{BjoernerZiegler1992} or \emph{simplicial shelling}~\cite{KorteSchraderEtAl1991} antimatroid. Given a graph $G$, we can define the simplicial vertex pruning antimatroid $\AMSVP_G$ via the MPS $\{p_v\}_{v \in V(G)}$ with
	\begin{align*}
		p_v(U) = 
		\begin{cases}
			1, &\text{ if $v$ is simplicial in $G \gminus U$;}\\
			0, &\text{ otherwise.}
		\end{cases}
	\end{align*}
	
	A candidate data structure for $A_G$ is essentially a data structure that maintains the set of simplicial vertices in $G$ under simplicial vertex deletion.
	
	There are known static and fully-dynamic algorithms to compute simplicial vertices in \emph{arbitrary} graphs~\cite{KloksKratschEtAl2000,LinSoulignacEtAl2012}. Since we only care about chordal graphs and only need vertex deletion, we can use a relatively simple data structure based on \emph{clique trees}. Details are found in \cref{sec:dec-simplicial}. The total running time of removing all vertices is $\fO(m+n)$. Thus, we can solve the corresponding restricted sorting problem in optimal time.
	
	It turns out that in this special case, we already achieve \emph{comparison} optimality, even without the improvements in \cref{sec:comparisons}. Since every chordal graph $G$ with $|V(G)| \ge 2$ has at least two simplicial vertices, we have $|\perms(\AMSVP_G)| \ge 2^{n-1}$. Thus, the $\fO(n + \log |\perms(\AMSVP_G)|)$ comparisons performed by topological heapsort (\cref{p:opt-queue-time}) are optimal.
	
	\section{Comparison optimality}\label{sec:comparisons}
	
	In this section, we modify the topological heapsort algorithm to only use an optimal $\fO(\log |\perms(S)|)$ comparisons, instead of $\fO(|\Sigma| + \log |\perms(S)|)$, while keeping the optimal running time. The special case of partial orders was again solved by Haeupler et al.~\cite{HaeuplerHladikEtAl2025a}. Our algorithm is similar, but some non-trivial adaptation is required.
	
	First of all, note that the additional $\fO(n)$ term is only relevant when $|\perms(S)| \le 2^{o(n)}$. In the partial order/DAG-sorting case, Haeupler et al.~\cite{HaeuplerHladikEtAl2025a} showed that this condition implies the existence of a long directed path in the DAG. Observe that such a path is pre-sorted. We prove a corresponding fact for antimatroids (\cref{sec:bottlenecks}).
	
	Haeupler et al.\ then proceed roughly as follows. They first compute a set of \emph{marked} vertices on the pre-sorted path. Intuitively, the marked vertices represent the interaction with the remainder of the graph: The removal of unmarked vertices does not affect ``availability'' of vertices outside of the path. This means that often, a large set of unmarked vertices can be removed at once, without any additional comparisons.
	
	This strategy cannot be adapted to general antimatroids in a straight-forward way, since the availability constraints are too complex to preprecompute a small set of marked vertices. We instead use the following three-step algorithm (given is an antimatroid $A$ on $\Sigma$):
	
	\begin{enumerate}
		\item Compute a long pre-sorted sequence $\beta$ (\cref{sec:bottlenecks}).
		\item Sort the set $\Gamma = \Sigma \setminus \support{\beta}$, using topological heapsort on the sub-antimatroid of $A$ \emph{induced} by $\Gamma$ (\cref{sec:sub-sort}). This gives us a sorted sequence $\gamma$.
		\item Merge the two sequences $\beta$ and $\gamma$, utilizing the antimatroid constraints (\cref{sec:merge}).
	\end{enumerate}
	
	All three steps can be implemented efficiently using a candidate data structure for $A$. We obtain:
	
	\begin{theorem}\label{p:main-comp}
		Given a set $\Sigma$, a candidate data structure $C$ for an antimatroid $A$ on $\sigma$, and an oracle for a total order $\pi \in \perms(A)$, we can sort $\Sigma$ w.r.t.\ $\prec_\pi$ in time $\fO( \tctotal(C) + \log |\perms(A)|)$ and with $\fO(\log |\perms(A)|)$ comparisons.
	\end{theorem}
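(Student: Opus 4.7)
The plan is to implement the three-stage scheme sketched at the start of \cref{sec:comparisons}. First, handle the trivial regime: if $|\perms(A)| \ge 2^{c|\Sigma|}$ for some fixed $c > 0$, then $|\Sigma| = \fO(\log|\perms(A)|)$ and both bounds follow directly from \cref{p:opt-queue-time}. Otherwise I would prove in \cref{sec:bottlenecks} an antimatroid analog of the long-path lemma for DAGs: $A$ admits a \emph{pre-sorted} sequence $\beta$ (meaning $b_i$ precedes $b_{i+1}$ in every $\tau \in \perms(A)$) with $|\support{\beta}| \ge |\Sigma| - \fO(\log|\perms(A)|)$. The intuition is that if every pre-sorted chain were shorter, one could independently swap many pairs of incomparable adjacent elements to exhibit $2^{\Omega(|\Sigma|)}$ permutations in $\perms(A)$. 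Because $\beta$ depends only on $A$ (not on $\pi$), the CDS $C$ suffices to compute it without any comparisons, in time subsumed by $\tctotal(C)$.

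Second, set $\Gamma = \Sigma \setminus \support{\beta}$, so $|\Gamma| = \fO(\log|\perms(A)|)$. In \cref{sec:sub-sort} I would define the \emph{restricted antimatroid} $A|_\Gamma$ obtained by deleting the letters of $\support{\beta}$ from every word of $A$, verify it is a full antimatroid with $|\perms(A|_\Gamma)| \le |\perms(A)|$ (each $\tau \in \perms(A|_\Gamma)$ interleaves with $\beta$ to yield at least one element of $\perms(A)$), and show that $C$ simulates a CDS for $A|_\Gamma$ with only linear overhead. Running topological heapsort on $A|_\Gamma$ then produces a sorted sequence $\gamma$ of $\Gamma$, using $\fO(|\Gamma| + \log|\perms(A|_\Gamma)|) = \fO(\log|\perms(A)|)$ comparisons within the claimed time, by \cref{p:opt-queue-time}.

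Finally (\cref{sec:merge}), I would merge $\beta$ and $\gamma$ into $\pi$. Since $\beta$ is pre-sorted and $\gamma$ is sorted, $\pi$ is the unique interleaving consistent with $\prec_\pi$ and $A$. The merge maintains pointers into the two sequences and uses $C$ to track availability: whenever only one of the two heads $b, c$ is available, output it with no comparison; when both are available, a single comparison decides. The main obstacle is bounding the ``both available'' comparisons by $\fO(\log|\perms(A)|)$. My plan combines two ingredients: (i) the decision tree of the merge has at most $|\perms(A)|$ leaves, since each leaf outputs a distinct permutation in $\perms(A)$; and (ii) a Hwang--Lin-style balanced merge that locates elements of the shorter sequence $\gamma$ inside $\beta$ via exponential search, but with antimatroid-guided skipping over forced prefixes of $\beta$ at no comparison cost. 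Reconciling (i) and (ii) requires care to avoid the usual leaves-versus-depth gap in unbalanced decision trees, and this balancing argument is what I expect to be the delicate part of the proof.
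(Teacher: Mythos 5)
Your three-phase plan (extract a long pre-sorted sequence $\beta$, sort the complement $\Gamma$ via a restricted/trace antimatroid, then merge) is exactly the paper's decomposition, and phases 1--2 match the paper closely: the paper realizes your "swap incomparable pairs" intuition via the \emph{layer sequence} of the MPS -- each non-singleton layer can be internally permuted freely, giving $|\perms(A)|\ge\prod_i |L_i|!\ge 2^{n-k}$ where $k$ is the number of layers (\cref{p:layer-count}); the bottleneck sequence is then the singleton layers, which by \cref{p:layer-ordering} is pre-sorted, and the $|\Gamma| \le 2\log|\perms(A)|$ bound follows (\cref{p:count-bottlenecks}). Your separate "trivial regime" case split is unnecessary once this is set up, but that is only a cosmetic difference. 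The trace/CDS-simulation step in phase 2 is also the paper's approach.

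Where your plan departs from the paper, and where the gap you flag actually lies, is the comparison bound in the merge. Your idea of combining a leaves-of-the-decision-tree count with a Hwang--Lin-style exponential search is not how the paper argues, and as you note, reconciling leaves with worst-case depth is exactly the issue. The paper instead bounds the comparisons \emph{directly}: in the merge (\cref{alg:merge}) each $d_j$ of the short side is placed via exponential search into a gap $g_{i_j}\dots g_{i^*_j-1}$ of the long side, costing $\fO(1+\log(i^*_j-i_j))$ comparisons. The key observation (\cref{p:merge:u-r}) is that every alternative placement of $d_j$ at one of the $i^*_j-i_j+1$ positions in that gap, simultaneously and independently for each $j$, yields a permutation that is still in $R$ -- this is checked directly from monotonicity of the precedence functions, using that moving $d_j$ earlier only shrinks the predecessor set of the $g_i$'s while $d_j$ was already available. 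Hence $|R|\ge\prod_j(i^*_j-i_j+1)$, so $\sum_j\log(i^*_j-i_j)\le\log|R|$ and the total is $\fO(|\Delta|+\log|R|)$. This replaces the abstract decision-tree count with an explicit witness family and avoids the balancing issue entirely. You also need, as the paper does, the case where $d_j$ is \emph{not yet available}: then the antimatroid constraint forces $g_i\prec_\pi d_j$, so $g_i$ can be output with no comparison at all; your phrase "when both are available, a single comparison decides" undersells the need for the exponential search here, but your later mention of Hwang--Lin suggests you were aware of that.
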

	
	This means that each efficient candidate data structure presented in \cref{sec:app} yields a comparison-optimal sorting algorithm.
	
	\subsection{Bottlenecks in antimatroids}\label{sec:bottlenecks}
	
	Let $S = \{p_x\}_{x \in \Sigma}$ be a monotone precedence system on $\Sigma$. The \emph{layer sequence} of $S$ is a partition of $\Sigma$ into nonempty \emph{layers} $L_1, L_2, \dots, L_k$ inductively defined as follows:
	\begin{itemize}
		\item $L_1 = \{ x \in \Sigma \mid p_x(\emptyset) = 1 \}$.
		\item $L_i = \{ x \in \Sigma \setminus \bar{L}_{i-1} \mid p_x(\bar{L}_{i-1}) = 1 \}$, where $\bar{L}_{i-1} = L_1 \cup L_2 \cup \dots \cup L_{i-1}$, for each $2 \le i \le k$.
	\end{itemize}

	Observe that a layer sequence exists if and only if $\perms(S) \neq \emptyset$, and that a layer sequence is always unique.
	If a candidate data structure $C$ is available, we can compute the layer sequence in $\fO(\tctotal(C))$ time as follows: Call $C.\cdsinit()$ and add the reported elements to $L_1$. Then, call $C.\cdsstep(x)$ for each $x \in L_1$, add the reported elements to $L_2$, and so on.
	
	\begin{example}%
		If $\perms(S)$ is the set of permutations of a vertex search antimatroid $\AMVS_{G,s}$, then $L_i$ contains each vertex $v$ with distance $i-1$ from $s$.
	\end{example}
	
	\begin{example}%
		If $\perms(S)$ is the set of topological orderings of a graph $G$ with $V(G) = \Sigma$, then $L_i$ contains each vertex $v$ where the \emph{longest path} ending at $v$ has length $i$. This corresponds to the layering defined by Haeupler et al.~\cite{HaeuplerHladikEtAl2025a}.
	\end{example}

	\begin{restatable}{lemma}{restateLayerCount}\label{p:layer-count}
		 Let $S$ be a monotone precedence system with $k$ layers. Then $|\perms(S)| \ge 2^{n-k}$.
	\end{restatable}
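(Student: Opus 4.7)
The plan is to construct a large explicit family of permutations inside $\perms(S)$ by respecting the layer order but allowing arbitrary shuffling within each layer. Concretely, call a permutation $\pi = x_1 x_2 \dots x_n$ \emph{layer-respecting} if it lists all of $L_1$ first (in some order), then all of $L_2$, then $L_3$, and so on up to $L_k$. There are exactly $\prod_{i=1}^{k} |L_i|!$ such permutations.

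The key claim is that every layer-respecting permutation lies in $\perms(S)$. To see this, fix such a $\pi$ and consider any position $i$ with $x_i \in L_j$. By construction, the prefix $\{x_1, \dots, x_{i-1}\}$ contains all elements of $L_1, \dots, L_{j-1}$, i.e., $\bar{L}_{j-1} \subseteq \{x_1, \dots, x_{i-1}\}$. By definition of $L_j$, we have $p_{x_i}(\bar{L}_{j-1}) = 1$, and then monotonicity of $p_{x_i}$ yields $p_{x_i}(\{x_1, \dots, x_{i-1}\}) = 1$. Since this holds at every position, $\pi \in \perms(S)$.

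It remains to estimate $\prod_{i=1}^{k} |L_i|!$. Using the elementary bound $m! \ge 2^{m-1}$ for every integer $m \ge 1$ (verified by induction, or by $m! \ge \prod_{j=2}^{m} 2 = 2^{m-1}$), and the fact that each layer is nonempty so $|L_i| \ge 1$, we get
\[
|\perms(S)| \;\ge\; \prod_{i=1}^{k} |L_i|! \;\ge\; \prod_{i=1}^{k} 2^{|L_i|-1} \;=\; 2^{\sum_i (|L_i|-1)} \;=\; 2^{n-k},
\]
as claimed.

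There is no real obstacle here: the nontrivial content is just the observation that the layer sequence gives a canonical way to linearize $\Sigma$ with complete freedom inside each block, which is immediate from monotonicity of the precedence functions combined with the inductive definition of the layers. The only thing to double-check when writing the proof is that the layer sequence exists (which is exactly the hypothesis $\perms(S) \neq \emptyset$, guaranteed because $\pi \in \perms(S)$ in our setting) and that each $L_i$ is nonempty, which holds by the definition of $k$ as the number of layers.
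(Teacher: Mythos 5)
Your proof is correct and follows essentially the same route as the paper: construct all layer-respecting permutations (arbitrary order within each $L_i$, blocks in layer order), argue membership in $\perms(S)$ via $p_{x}(\bar{L}_{j-1})=1$ plus monotonicity, then bound $\prod_i |L_i|!$ by $2^{n-k}$ using $m!\ge 2^{m-1}$. The only difference is that you spell out the monotonicity step explicitly, which the paper leaves implicit.
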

	\begin{proof}
		We essentially follow Haeupler et al.~\cite{HaeuplerHladikEtAl2025a}.
		Let $L_1, L_2, \dots, L_k$ be the layer sequence for $S$. For each $i \in [k]$, let $\alpha_i$ be an arbitrary permutation of $L_i$. We claim that the word $\alpha = \alpha_1 \alpha_2 \dots \alpha_k$ is contained in $\perms(S)$. Indeed, if $x \in L_i$, then all $y \in \bar{L}_{i-1}$ precede $x$ in $\alpha$. Since $p_x(\bar{L}_{i-1}) = 1$ by definition of $L_i$, we have $\alpha \in \perms(S)$.
		
		There are $\prod_{i = 1}^k |L_i|!$ ways of constructing $\alpha$. Writing $n_i = |L_i|$ and $n = |\Sigma|$, we have
		\[
			|\perms(S)| \ge \prod_{i=1}^k n_i! \ge \prod_{i=1}^k 2^{n_i-1} = 2^{n-k}.\qedhere
		\]
	\end{proof}
		
	We now proceed to show how to find a long sequence $x_1 x_2 \dots x_k \in \swords{\Sigma}$ such that for all~$i \in [k-1]$, we have $x_i \prec_\alpha x_{i+1}$ for all $\alpha \in \perms(A)$; in other words, the sequence is pre-sorted. In the DAG-sorting case, we can simply take a longest path in the DAG. Here, we need a different approach, which is also related to the technique of Haeupler et al.~\cite{HaeuplerHladikEtAl2025a}.
	
	\begin{lemma}\label{p:layer-ordering}
		Let $L_1, L_2, \dots, L_k$ be the layers of a monotone precedence system $S$, and let $\pi \in \perms(S)$. Then, for all $1 \le i < j \le k$ and $y \in L_j$, there exists an $x \in L_i$ such that $x \prec_\pi y$.
	\end{lemma}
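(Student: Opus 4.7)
The plan is to write $\pi = y_1 y_2 \dots y_n$, let $m$ be the index with $y = y_m$, and produce the required $x \in L_i$ explicitly as $y_t$ for a carefully chosen $t < m$. The main engine will be monotonicity of the precedence functions combined with the recursive definition of the layers.

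First I would argue that $y \notin \bar{L}_{i-1}$: since $y \in L_j$ with $j > i \ge 1$ and the layers are pairwise disjoint, $y$ cannot belong to any $L_\ell$ with $\ell \le i-1$. Hence the set of indices $t$ with $y_t \notin \bar{L}_{i-1}$ is nonempty (it contains $m$), so we may let $t^*$ be its minimum; in particular $t^* \le m$ and $y_1, \dots, y_{t^*-1} \in \bar{L}_{i-1}$.

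Next I would use the defining property of $\pi \in \perms(S)$: $p_{y_{t^*}}(\{y_1, \dots, y_{t^*-1}\}) = 1$. Since $\{y_1, \dots, y_{t^*-1}\} \subseteq \bar{L}_{i-1}$, monotonicity of $p_{y_{t^*}}$ yields $p_{y_{t^*}}(\bar{L}_{i-1}) = 1$. Combined with $y_{t^*} \notin \bar{L}_{i-1}$, the definition of $L_i$ gives $y_{t^*} \in L_i$. Finally, because the layers are disjoint and $y_m \in L_j$ with $j \ne i$, we must have $t^* \ne m$, hence $t^* < m$, so $x := y_{t^*}$ satisfies $x \in L_i$ and $x \prec_\pi y$.

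I do not expect a serious obstacle here; the only subtlety is checking that $t^*$ is strictly less than $m$ (rather than merely at most $m$), which is handled by the disjointness of layers. The case $i = 1$ is not special: $\bar{L}_0 = \emptyset$ makes $t^* = 1$, and one still concludes $t^* < m$ because $y \in L_j$ with $j \ge 2$ forces $m > 1$.
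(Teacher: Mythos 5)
Your proof is correct and rests on the same key observation as the paper's, namely that the $\pi$-earliest element outside $\bar{L}_{i-1}$ must land in $L_i$ (via $\pi \in \perms(S)$ plus monotonicity). The only difference is presentational: you argue directly, whereas the paper phrases it as a proof by contradiction starting from a hypothetical $y' \in L_j$ that precedes all of $L_i$.
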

	\begin{proof}
		Suppose the lemma does not hold for some pair $i < j$, i.e., there exists some $y' \in L_j$ such that $y' \prec_\pi x$ for all $x \in L_i$.
		Let $y = \min_\pi(\Sigma \setminus \bar{L}_i) = \min_\pi(L_{i+1} \cup L_{i+2} \cup \dots \cup L_k)$. Since $y \preceq_\pi y'$, we have $y \prec_\pi x$ for all $x \in L_i$, implying that actually $y = \min_\pi(\Sigma \setminus \bar{L}_{i-1})$. This means that $p_y(\bar{L}_{i-1}) = 1$, implying that $y \in L_i$, a contradiction.
	\end{proof}

	An element $x \in \Sigma$ is a called a \emph{bottleneck} if it is the only element in its layer. Let $t$ be the number of bottlenecks and let $k$ be the number of layers. Since each non-bottleneck layer contains at least two vertices, we have $n-t \ge 2(k-t)$, which implies $n-k \ge \tfrac{n-t}{2}$. By \cref{p:layer-count}, we have
	\begin{corollary}\label{p:count-bottlenecks}
		Let $S$ be an MPS with $t$ bottleneck elements. Then $|\perms(S)| \ge 2^{(n-t)/2}$.
	\end{corollary}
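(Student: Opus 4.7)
The plan is to combine the layer-size accounting sketched immediately before the corollary with \cref{p:layer-count}. Let $L_1, L_2, \dots, L_k$ be the (unique) layer sequence of $S$, and call a layer a \emph{bottleneck layer} if it consists of a single element. By definition, the number of bottleneck elements equals the number of bottleneck layers, which is exactly $t$, and the remaining $k - t$ layers each contain at least two elements. Summing over all layers gives
\[
	n \;=\; \sum_{i=1}^{k} |L_i| \;\ge\; t \cdot 1 + (k-t) \cdot 2 \;=\; 2k - t.
\]

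Rearranging yields $k \le (n+t)/2$, hence $n - k \ge (n-t)/2$. Plugging this into the bound $|\perms(S)| \ge 2^{n-k}$ from \cref{p:layer-count} immediately gives
\[
	|\perms(S)| \;\ge\; 2^{n-k} \;\ge\; 2^{(n-t)/2},
\]
as claimed. There is no real obstacle here: the only ingredient beyond elementary arithmetic is \cref{p:layer-count}, and the corollary is essentially a quantitative rephrasing that replaces the layer count $k$ (which is intrinsic to $S$ but less convenient to use later) by the bottleneck count $t$ (which is the quantity that will be relevant for bounding the ``presorted'' portion of the input).
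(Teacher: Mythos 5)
Your proof is correct and is essentially the same argument as the paper's: both count bottleneck and non-bottleneck layers to get $n - t \ge 2(k-t)$ (your $n \ge 2k - t$ is the same inequality rearranged), deduce $n - k \ge (n-t)/2$, and plug into \cref{p:layer-count}.
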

	
	The \emph{bottleneck sequence} $\beta = b_1 b_2 \dots b_t$ of $S$ contains all bottlenecks, ordered by the layer they appear in. \Cref{p:layer-ordering} implies that, for every permutation $\pi \in \perms(S)$, we have $b_1 \prec_\pi b_2 \prec_\pi \dots \prec_\pi b_t$. Overall, we obtain
	\begin{lemma}\label{p:bot-seq}
		Let $S$ be an MPS on $\Sigma$, and write $n = |\Sigma|$. Then there exists a sequence $\beta = b_1 b_2 \dots b_t$ that appears as a subsequence in every $\pi \in \perms(S)$, and $|\perms(S)| \ge 2^{(n-t)/2}$.
		Moreover, given a candidate data structure $C$ for $S$, we can compute $\beta$ in time $\fO(\tctotal(C))$.
	\end{lemma}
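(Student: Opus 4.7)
The plan is to assemble the lemma from ingredients already established: \cref{p:layer-ordering} for the subsequence property, \cref{p:count-bottlenecks} for the counting bound, and the paragraph before \cref{p:layer-count} describing how layers can be computed from a candidate data structure. The only statement requiring a genuine (though short) argument is the first part — that the bottleneck sequence $\beta$ really is a subsequence of every $\pi \in \perms(S)$.

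To prove this first part, I would fix $\pi \in \perms(S)$ and show $b_i \prec_\pi b_{i+1}$ for every $i \in [t-1]$. By definition, $b_i$ lives in some layer $L_{a_i}$ with $a_1 < a_2 < \dots < a_t$, and each $b_i$ is the \emph{only} element of $L_{a_i}$. Applying \cref{p:layer-ordering} with the pair $(i,j) = (a_i, a_{i+1})$ and $y = b_{i+1}$, we obtain some $x \in L_{a_i}$ with $x \prec_\pi b_{i+1}$; but $L_{a_i} = \{b_i\}$ forces $x = b_i$, so $b_i \prec_\pi b_{i+1}$. By transitivity $b_1 \prec_\pi b_2 \prec_\pi \dots \prec_\pi b_t$, so $\beta$ appears as a subsequence of $\pi$.

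The counting inequality $|\perms(S)| \ge 2^{(n-t)/2}$ is exactly \cref{p:count-bottlenecks}, so it requires no further work. For the algorithmic claim, I would reuse the construction already sketched after the definition of the layer sequence: a single call to $C.\cdsinit()$ gives us $L_1$, and then traversing the reported elements, calling $C.\cdsstep(x)$ for each $x \in L_i$ in turn, produces $L_{i+1}$. This yields the entire layer partition in total time $\fO(\tctotal(C))$. Scanning the layers once more and recording those $L_i$ of size one in order produces $\beta$; this scan costs $\fO(n)$, which is absorbed into $\fO(\tctotal(C))$ because $C$ must spend at least constant time per element it reports.

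There is essentially no obstacle in this proof — all three assertions reduce to results already in the paper, and the only computation is a short application of \cref{p:layer-ordering} exploiting the fact that a bottleneck layer is a singleton. The lemma is really a packaging step that bundles the earlier definitions and bounds into the interface consumed by the three-step algorithm of \cref{sec:comparisons}.
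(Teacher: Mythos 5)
Your proposal is correct and follows the paper's approach exactly: the paper presents no separate proof of \cref{p:bot-seq} but obtains it as a direct consequence of \cref{p:layer-ordering} applied to the singleton bottleneck layers, \cref{p:count-bottlenecks}, and the layer-computation procedure described immediately after the definition of the layer sequence. Your application of \cref{p:layer-ordering} to non-adjacent layer indices $(a_i, a_{i+1})$ and the observation that $\fO(n)$ is dominated by $\tctotal(C)$ are both sound.
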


	\subsection{Sorting subsets}\label{sec:sub-sort}
	
	\newcommand{\restr}[1]{|_{#1}}
	
	In this section, we show how to sort the non-bottleneck elements $\Sigma \setminus \support{\beta}$ of an antimatroid~$A$. Recall that a candidate data structure $C$ is given, and we want to sort in overall time $\fO( \tctotal(C) + \log |\perms(A)|)$, with $\fO(\log |\perms(A)|)$ comparisons (see \cref{p:main-comp}). We show how to sort any subset $\Gamma \subseteq \Sigma$ within this budget.
	
	We first need some definitions.
	Let $\Sigma$ be an alphabet. Given a word $\alpha$ and a subset $\Gamma \subseteq \Sigma$, the \emph{restriction} of $\alpha$ to $\Gamma$, written $\alpha\restr{\Gamma}$, is obtained by removing all letters in $\Sigma \setminus \Gamma$ from $\alpha$. For a simple language $L \subseteq \swords{\Sigma}$, let $L\restr{\Gamma} = \{ \alpha\restr{\Gamma} \mid \alpha \in L\}$.
	It is known that if $A \subseteq \swords{\Sigma}$ is an antimatroid, then the restriction $A\restr{\Gamma}$ is also an antimatroid (called the \emph{trace}~\cite{KorteSchraderEtAl1991,BjoernerZiegler1992}).
	
	Let us now assume we have a candidate data structure $C_\Gamma$ for $A\restr{\Gamma}$. Then we can use topological heapsort to sort $\Gamma$. The running time is $\tctotal(C_\Gamma) + \fO(\log |\perms(A\restr{\Gamma})|)$, and we need $\fO(|\Gamma| + \log |\perms(A\restr{\Gamma})|)$ comparisons. First, observe that $|\perms(A\restr{\Gamma})| \le |\perms(A)|$, since the restriction operation surjectively maps $\perms(A\restr{\Gamma})$ to $\perms(A)$. Second, recall that $|\Gamma| \le 2 \log |\perms(A)|$ if $\Gamma$ is obtained by removing all bottleneck elements (\cref{p:count-bottlenecks}). If we further assume that $\tctotal(C_\Gamma) \le \fO(\tctotal(C))$, then we are within the budget of \cref{p:main-comp}.
	
	It remains to build a candidate data structure $C_\Gamma$ for $A\restr{\Gamma}$ with $\tctotal(C_\Gamma) \le \fO(\tctotal(C))$. Suppose $C$ is given. Then $C_\Gamma.\cdsinit()$ is implemented as follows. First, call $C.\cdsinit()$, and store the result in a set $Y$.
	Then, we perform the following \emph{cleanup} procedure: As long as there is some $y \in Y \setminus \Gamma$, we remove $y$ from $Y$, call $C.\cdsstep(y)$, and then add the newly reported elements to $Y$. Finally, when $Y \subseteq \Gamma$, we report $Y$.

	To implement $C_\Gamma.\cdsstep(y)$, we similarly first call $C.\cdsstep(y)$, store the result in a set $Y$, and then cleanup as above.
	
	The total running time of $C_\Gamma$ is clearly $\tctotal(C) + \fO(|\Sigma|) = \fO(\tctotal(C))$. We now show that $C_\Gamma$ is correct, starting with a formalization of correctness of a candidate data structure~$C$.
	
	First, for $\alpha = a_1 a_2 \dots a_k \in \swords{\Sigma}$, let $R$ be the overall set of elements reported when calling $C.\cdsinit()$ and then $C.\cdsstep(a_i)$ for $i = 1, 2, \dots, k$, and let $\Av(C,\alpha) = R \setminus \support{\alpha}$.
	(For simplicity, we assume here that $C.\cdsstep(x)$ may be called even when $x$ is not currently available, and may then behave arbitrarily.)
	Second, for an antimatroid $A$ and some word $\alpha \in A$, define $\Cont(A, \alpha) = \{ x \in \Sigma \mid \alpha x \in A \}$. Observe that if $\{p_x \mid x \in \Sigma\}$ is a monotone precedence system for $A$, then $\Cont(A, \alpha) = \{ x \in \Sigma \setminus \support{\alpha} \mid p_x(\support{\alpha}) = 1\}$
	
	\begin{observation}
		$C$ is a candidate data structure for $A$ if and only if $\Av(C, \alpha) = \Cont(A,\alpha)$ for all $\alpha \in A$.
	\end{observation}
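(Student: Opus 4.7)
The natural approach is induction on $|\alpha|$ maintaining a slightly stronger invariant on the cumulative report. For $\alpha = a_1 a_2 \dots a_k \in A$, let $R_\alpha$ denote the entire set of elements reported during the execution of $C.\cdsinit()$ followed by $C.\cdsstep(a_1), \dots, C.\cdsstep(a_k)$, so that by definition $\Av(C, \alpha) = R_\alpha \setminus \support{\alpha}$. The plan is to establish the invariant
\[
    R_\alpha \;=\; \support{\alpha} \cup \Cont(A, \alpha) \qquad \text{for every } \alpha \in A,
\]
from which the observation follows immediately, since $\support{\alpha}$ and $\Cont(A, \alpha)$ are disjoint by the definition of $\Cont$, making the invariant equivalent to $\Av(C,\alpha) = \Cont(A,\alpha)$.

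For the forward direction, fix a monotone precedence system $\{p_x\}_{x \in \Sigma}$ for $A$, assume $C$ satisfies the $\cdsinit$/$\cdsstep$ specification, and induct on $|\alpha|$. The base case $\alpha = \emptyword$ is immediate: by the specification of $\cdsinit$, $R_\emptyword = \{x \in \Sigma : p_x(\emptyset) = 1\} = \Cont(A, \emptyword)$. For the inductive step, suppose $R_\alpha = \support{\alpha} \cup \Cont(A, \alpha)$ and consider $\alpha x \in A$. The specification of $\cdsstep(x)$ tells us that the newly reported elements are precisely $N = \{y : p_y(\support{\alpha}) = 0 \text{ and } p_y(\support{\alpha} \cup \{x\}) = 1\}$. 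Monotonicity of each $p_y$ yields the identity
\[
    \{y : p_y(\support{\alpha x}) = 1\} \;=\; \{y : p_y(\support{\alpha}) = 1\} \cup N,
\]
and combining this with the inductive hypothesis, while noting that $x \in R_\alpha$ (because the step is valid, i.e.\ $p_x(\support{\alpha}) = 1$), gives $R_{\alpha x} = R_\alpha \cup N = \support{\alpha x} \cup \Cont(A, \alpha x)$, as required.

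The backward direction is the same calculation read in reverse. Assuming $\Av(C, \alpha) = \Cont(A, \alpha)$ for every $\alpha \in A$, the invariant pins down $R_\alpha$ exactly, so inspecting $R_\emptyword$ recovers the $\cdsinit$ specification and taking the set difference $R_{\alpha x} \setminus R_\alpha$ recovers the $\cdsstep$ specification, again via the monotonicity identity above. The only mild subtlety --- the one point I would flag as worth writing out carefully --- is that the observation is phrased cumulatively through $\Av$ whereas the CDS specification is per-operation; monotonicity is precisely what bridges these two viewpoints, because once an element is reported it can only be consumed, never ``un-reported''. I do not anticipate any substantive obstacle beyond this.
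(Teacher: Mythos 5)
The paper gives no proof for this observation; it is stated as an immediate consequence of the definitions. Your write-up is a careful formalization of exactly that, and the core of it is correct. The invariant $R_\alpha = \support{\alpha} \cup \Cont(A,\alpha)$ is the right thing to carry through the induction, the base case matches the $\cdsinit$ specification, and the inductive step correctly combines monotonicity of the $p_y$ with the fact that the step parameter $x$ was itself previously reported. This establishes the forward implication cleanly.

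One place where the wording outruns what the hypothesis actually gives you is the backward direction, where you say the assumption ``pins down $R_\alpha$ exactly.'' It does not: $\Av(C,\alpha) = \Cont(A,\alpha)$ only constrains $R_\alpha \setminus \support{\alpha}$, leaving $R_\alpha \cap \support{\alpha}$ (and, more to the point, which individual call produced each element) unconstrained. A hypothetical $C$ that on $\cdsstep(x)$ redundantly re-reports $x$ itself, or some element already reported at an earlier step, would satisfy $\Av(C,\alpha)=\Cont(A,\alpha)$ for all $\alpha \in A$ yet would not literally match the per-operation $\cdsstep$ specification, since the set it reports would not equal $\{y : p_y(X)=0, p_y(X\cup\{x\})=1\}$. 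This is really a (benign) looseness in the observation as stated rather than a defect unique to your argument; the paper implicitly identifies data structures by their cumulative reporting behavior, which is all that topological heapsort and the $C_\Gamma$ construction consume. Your proof goes through under that reading, but it would be worth replacing ``pins down $R_\alpha$ exactly'' with an explicit statement that one only recovers the $\cdsinit$/$\cdsstep$ reports up to redundant re-reporting, which the downstream uses tolerate.
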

	
	We now show that $C_\Gamma$ is a candidate data structure for $A\restr{\Gamma}$. Consider a word $\alpha = a_1 a_2 \dots a_k$, and suppose we call $C_\Gamma.\cdsinit()$ and then $C_\Gamma.\cdsstep(a_i)$ for $i = 1, 2, \dots, k$. For each $i \in \{0, 1, \dots, k\}$, the $i$-th call of $\cdsstep$ (or $\cdsinit$, if $i = 0$) makes an additional sequence of calls $C.\cdsstep(x)$; say these calls are with parameters $b_{i,1}, b_{i,2}, \dots, b_{i,j_i}$, let $\beta_i = b_{i,1} b_{i,2} \dots b_{i,j_i}$, and let $\beta = \beta_0 a_1 \beta_1 a_2 \beta_2 \dots a_k \beta_k$. Since $C_\Gamma$ reports precisely those elements reported by $C$ that are contained in $\Gamma$, we have $\Av(C_\Gamma, \alpha) = \Av(C, \beta) \cap \Gamma$.
	
	Since $C$ is a CDS for $A$, we have $\Av(C, \beta) \cap \Gamma = \Cont(A, \beta) \cap \Gamma$. Take some $x \in \Cont(A, \beta) \cap \Gamma$. Then $\beta x \in A$, which implies that $(\beta x)\restr{\Gamma} = (\beta\restr{\Gamma}) x = \alpha x \in A_\Gamma$, and thus $x \in \Cont(A\restr{\Gamma}, \alpha)$. Thus, we have shown that $\Av(C_\Gamma, \alpha) \subseteq \Cont(A\restr{\Gamma},\alpha)$, which is one direction of the correctness proof.
	
	We now show $\Av(C_\Gamma, \alpha) \supseteq \Cont(A\restr{\Gamma},\alpha)$. Let $x \in \Cont(A\restr{\Gamma},\alpha)$. Then $\alpha x \in A\restr{\Gamma}$, implying that there is some $\beta^* \in A$ with $\beta^*\restr{\Gamma} = \alpha x$. Assume that $\beta^*$ is has minimum length, so $\beta^* = \beta'x$ for some $\beta'$. Since $\beta'\restr{\Gamma} = \alpha$, we can decompose $\beta'$ as follows: $\beta' = \beta'_0 a_1 \beta'_1 a_2 \beta'_2 \dots a_k \beta'_k$.
	
	We claim that $\support{\beta}'_i \subseteq \support{\beta}_i$ for all $i \in \{0, 1, \dots, k\}$.
	Recall that $\beta' x \in A$, so $p_x(\support{\beta}') = 1$, and our claim implies that $p_x(\support{\beta}) = 1$ by monotonicity.
	This, in turn, implies that $\beta x \in A$, and thus $x \in \Av(C_\Gamma,\alpha)$ by construction of $\beta$. This shows that $\Av(C_\Gamma, \alpha) \supseteq \Cont(A\restr{\Gamma},\alpha)$.
	
	We finish by proving the claim. Consider the first letter $b'_{0,0}$ in $\beta'_0$. Clearly, we have $p_{b'_{0,0}}(\emptyset) = 1$, so $b'_{0,0}$ is reported by $C.\cdsinit()$. Since $b'_{0,0} \notin \Gamma$, we know that $C_\Gamma.\cdsinit()$ calls $C.\cdsstep(b'_{0,0})$ at some point, which means that $b'_{0,0} \in \support{\beta_0}$ by definition.
	
	For the second letter $b'_{0,1}$, we have $p_{b'_{0,1}}(\{b'_{0,0}\}) = 1$. Recall that $C_\Gamma.\cdsinit()$ calls $C.\cdsstep(b'_{0,0})$ in the cleanup step, which then reports $b'_{0,1}$ (if it was not reported before). Again, since $b'_{0,1} \notin \Gamma$, we know that $C.\cdsstep(b'_{0,1})$ is called, implying $b'_{0,1} \in \support{\beta_0}$. By induction, this holds for every letter in $\beta'_0$, and another induction extends this to $\beta'_i$ with $i = 1, 2, \dots, k$. This concludes the correctness proof. We have shown:
	
	\begin{lemma}\label{p:sort-subset}
		Given a set $\Sigma$, a subset $\Gamma \subseteq \Sigma$, a candidate data structure $C$ for an antimatroid $A$ on $\Sigma$, and an oracle for a total order $\pi \in \perms(A)$, we can sort $\Gamma$ w.r.t.\ $\prec_\pi$ in time $\fO( \tctotal(C) + \log |\perms(A)|)$ and with $\fO(|\Gamma| + \log |\perms(A)|)$ comparisons.
	\end{lemma}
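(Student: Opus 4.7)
The plan is to construct a candidate data structure $C_\Gamma$ for the trace antimatroid $A\restr{\Gamma}$ by wrapping $C$ in a \emph{cleanup} procedure, and then run topological heapsort on $(\Gamma, C_\Gamma, \pi)$, invoking \cref{p:opt-queue-time}. The construction of $C_\Gamma$ is exactly as described immediately before the statement: $C_\Gamma.\cdsinit()$ first calls $C.\cdsinit()$ into a buffer $Y$, then, as long as $Y$ contains some $y \notin \Gamma$, it removes $y$ from $Y$, invokes $C.\cdsstep(y)$, and adds the newly reported elements to $Y$. Finally, it reports the remaining $Y \subseteq \Gamma$. The implementation of $C_\Gamma.\cdsstep(y)$ is analogous, starting with $C.\cdsstep(y)$ instead.

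The first step is to bound $\tctotal(C_\Gamma)$. Since $C.\cdsstep$ is invoked at most $|\Sigma|$ times overall (once per element of $\Sigma$), and the wrapping work outside of $C$-calls is constant per reported/removed element, we obtain $\tctotal(C_\Gamma) \le \tctotal(C) + \fO(|\Sigma|) = \fO(\tctotal(C))$.

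The second and main step is to verify correctness, i.e., $\Av(C_\Gamma, \alpha) = \Cont(A\restr{\Gamma}, \alpha)$ for every $\alpha \in A\restr{\Gamma}$. This is precisely what the paragraphs before the statement already argue: writing $\beta$ for the full sequence of $\cdsstep$ arguments passed to $C$ while processing $\alpha$ through $C_\Gamma$, correctness of $C$ gives $\Av(C_\Gamma, \alpha) = \Av(C, \beta) \cap \Gamma = \Cont(A, \beta) \cap \Gamma$, and the $\subseteq$ direction follows from $(\beta x)\restr{\Gamma} = \alpha x$. The reverse inclusion is the harder half: given $x \in \Cont(A\restr{\Gamma}, \alpha)$, pick a shortest witness $\beta' x \in A$ with $\beta'\restr{\Gamma} = \alpha$, decompose $\beta' = \beta'_0 a_1 \beta'_1 \cdots a_k \beta'_k$, and show $\support{\beta}'_i \subseteq \support{\beta}_i$ by induction on the letters of each $\beta'_i$, using that the cleanup procedure keeps invoking $C.\cdsstep$ on every non-$\Gamma$ element it sees. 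Monotonicity then gives $p_x(\support{\beta}) = 1$, hence $x \in \Av(C_\Gamma, \alpha)$. I expect the bookkeeping in this induction to be the main technical obstacle.

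Once $C_\Gamma$ is a valid candidate data structure for $A\restr{\Gamma}$, we apply topological heapsort to $(\Gamma, C_\Gamma, \pi)$, noting that the oracle for $\pi$ directly supplies comparisons within $\Gamma$. By \cref{p:opt-queue-time}, the queue time is $\fO(|\Gamma| + \log|\perms(A\restr{\Gamma})|)$, and the number of comparisons is bounded likewise. Because the restriction map $\alpha \mapsto \alpha\restr{\Gamma}$ is a surjection from $\perms(A)$ onto $\perms(A\restr{\Gamma})$, we have $|\perms(A\restr{\Gamma})| \le |\perms(A)|$. Combining this with the bound $\tctotal(C_\Gamma) = \fO(\tctotal(C))$ from the first step yields the total running time $\fO(\tctotal(C) + \log|\perms(A)|)$ and comparison count $\fO(|\Gamma| + \log|\perms(A)|)$ claimed by the lemma.
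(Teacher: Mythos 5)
Your proposal matches the paper's own proof in essentially every step: the same cleanup-wrapper construction of $C_\Gamma$, the same $\fO(\tctotal(C) + |\Sigma|)$ bound on the wrapper's total time, the same two-directional correctness argument (with the shortest-witness decomposition and the letter-by-letter induction for the $\supseteq$ inclusion), and the same final application of \cref{p:opt-queue-time} together with the surjection $\perms(A) \twoheadrightarrow \perms(A\restr{\Gamma})$ to obtain $|\perms(A\restr{\Gamma})| \le |\perms(A)|$. The only place you stop short is in fully carrying out the induction that shows $\support{\beta}'_i \subseteq \support{\beta}_i$, which you flag as the main technical obstacle; the paper does supply this, but your sketch of what must be shown and why it works (the cleanup keeps invoking $C.\cdsstep$ on every non-$\Gamma$ element, and monotonicity transfers $p_x(\support{\beta}') = 1$ to $p_x(\support{\beta}) = 1$) is exactly the right idea.
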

	
	\subsection{Merging}\label{sec:merge}
	
	We now want to optimally merge under antimatroid constraints:
	\begin{restatable}{theorem}{restateMerge}\label{p:merge}
		Let $A$ be an antimatroid on $\Sigma$, let $\pi \in \perms(A)$, and let $\Gamma, \Delta$ be a partition of~$\Sigma$. Let $\gamma = \pi\restr{\Gamma}$ and $\delta = \pi\restr{\Delta}$. Let $R \subseteq \perms(A)$ contain each permutation $\rho \in \perms(A)$ such that $\rho\restr{\Gamma} = \gamma$ and $\rho\restr{\Delta} = \delta$.
				
		Given $\Sigma$, $\gamma$, $\delta$, an oracle for $\pi$, and a candidate data structure $C$ for $A$, we can compute $\pi$ in time $\fO( \tctotal(C) + \log |R| )$ and with $\fO( |\Delta| + \log |R| )$ comparisons.
	\end{restatable}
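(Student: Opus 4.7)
The plan is to process $\delta$ one element at a time, maintaining pointers $i$ into $\gamma$ and $j$ into $\delta$, together with the candidate data structure $C$ (which tracks the set $Y$ of available elements). For each $\delta_j$, the algorithm runs two sub-phases. In the first, while $\delta_j \notin Y$, it consumes $\gamma_i$ without any comparison, advances $i$, and steps $C$; this is forced because, with $\delta_j$ unavailable, the next element of $\pi$ must be $\gamma_i$. Once $\delta_j \in Y$, in the second sub-phase it either consumes $\delta_j$ directly (if $\gamma_i \notin Y$ or $i$ has exhausted $\gamma$) or, if both fronts are available, runs exponential-plus-binary search using the oracle to determine $b_j$, the number of remaining $\gamma$-elements preceding $\delta_j$ in $\pi$; this step compares elements regardless of availability. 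Then $\gamma_i, \ldots, \gamma_{i+b_j-1}$ and $\delta_j$ are consumed in order via $C$. When all of $\delta$ has been consumed, any remaining $\gamma$-elements are consumed in order via $C$ without comparisons.

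Correctness of every consumption reduces to monotonicity. For the consumption of $\gamma_{i+k}$ with $0 \le k < b_j$, the algorithm's consumed set is $\{\gamma_1, \ldots, \gamma_{i+k-1}\} \cup \{\delta_1, \ldots, \delta_{j-1}\}$, while in $\pi$ the set consumed just before $\gamma_{i+k}$ is $\{\gamma_1, \ldots, \gamma_{i+k-1}\} \cup D$ with $D \subseteq \{\delta_1, \ldots, \delta_{j-1}\}$ (because $\gamma_{i+k}$ precedes $\delta_j$ in $\pi$ by choice of $b_j$). Our set is thus a superset, so monotonicity yields $p_{\gamma_{i+k}} = 1$ on it; analogous arguments handle $\delta_j$ itself and the trailing $\gamma$-clean-up.

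The key quantitative ingredient is the comparison bound, proved via a potential argument with $\Phi(Y) = \log R(Y)$, where $R(Y)$ denotes the number of completions of $Y$ lying in $R$. Let $X$ be the consumed set at the start of the second sub-phase for $\delta_j$ and $X' = X \cup \{\gamma_i, \ldots, \gamma_{i+b_j-1}, \delta_j\}$. For each of the $b_j + 1$ positions at which $\delta_j$ can be inserted among $\gamma_i, \ldots, \gamma_{i+b_j-1}$, the resulting block yields a valid prefix from $X$ in $A$ (each intermediate state is a superset of the corresponding state along $\pi$, so monotonicity applies), and concatenating it with any completion of $X'$ in $R$ produces a distinct element of $R$. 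Hence $R(X) \ge (b_j + 1)\, R(X')$, so $\Phi(X) - \Phi(X') \ge \log(b_j + 1)$. Since $\Phi$ decreases monotonically and ends at $0$, summing over the second sub-phases gives $\sum_j \log(b_j + 1) \le \log |R|$. An exponential search uses $\fO(\log(b_j + 1))$ comparisons per sub-phase, of which there are at most $|\Delta|$, yielding the target $\fO(|\Delta| + \log |R|)$ comparisons. The running time is $\fO(\tctotal(C))$ for CDS work, plus $\fO(\log |R|)$ for the searches, plus $\fO(|\Sigma|) \subseteq \fO(\tctotal(C))$ bookkeeping.

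The main obstacle will be the $R(X) \ge (b_j+1) R(X')$ lower bound. One must verify carefully that each of the $b_j + 1$ orderings of the block $\{\gamma_i, \ldots, \gamma_{i+b_j-1}, \delta_j\}$ constitutes a valid antimatroid prefix extending $X$, and that the resulting full completions are pairwise distinct and all belong to $R$; both facts ultimately follow from the monotonicity axiom of $A$ and the assumption $\pi \in R$, but the bookkeeping has to be done with care.
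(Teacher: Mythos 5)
Your proposal is correct and follows the same approach as the paper: process $\delta$ one element at a time, consume $\gamma$-elements without comparisons while $\delta_j$ is unavailable, and exponential-search for the insertion point once $\delta_j$ becomes available. Your potential-function bound $R(X) \ge (b_j+1)\,R(X')$, telescoped to $\sum_j \log(b_j+1) \le \log|R|$, is exactly the paper's counting argument in incremental form---the paper instead constructs the product set $U$ of all $(b_j+1)$-way interleavings at once and shows $U \subseteq R$ via the same monotonicity reasoning---so the difference is purely presentational.
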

	
	 \Cref{alg:merge} gives an implementation of \cref{p:merge}. It can be seen as a simplified version of Haeupler et al.'s \emph{heapsort with lookahead} algorithm~\cite{HaeuplerHladikEtAl2025a}.\footnote{In fact, if the antimatroid $A$ is given as a collection of precedence formulas, it simplifies to a \emph{partial order} under the assumption that $\gamma$ and $\delta$ are sorted. However, our proof works with general candidate data structures.} The function $\Call{exp-search}{d_j; g_ig_{i+1} \dots g_k}$ returns the smallest index $i^*$ with $i \le i^* \le k$ such that $d_j \prec_\pi g_{i^*}$, or $i^* = k+1$ if $g_k \prec_\pi d_j$. Implemented with an \emph{exponential search}~\cite{BentleyYao1976}, its time and comparison complexity is both $\fO( 1 + \log (i^*-i) )$.
	 
	 \begin{algorithm}
	 	\caption{Merging two pre-sorted sequences $\gamma$ and $\delta$ in an antimatroid, given as a candidate data structure $C$ and a comparison oracle for a total order $\pi$.}\label{alg:merge}
	 	\begin{algorithmic}
	 		\Procedure{merge}{$C, \gamma = g_1 g_2 \dots g_k, \delta = d_1 d_2 \dots d_\ell$}
	 		\State $X \gets C.\cdsinit()$\Comment{Currently available elements.}
	 		\State $i = 1, j = 1$
	 		\While{$i \le k$ or $j \le \ell$}
	 		\If{$j \le \ell$ and $d_j \in X$}
	 		\State $i^* = \Call{exp-search}{d_j; g_i g_{i+1} \dots g_k}$
	 		\ForEach{$i'$ with $i \le i' < i^*$}
	 		\State Report $g_{i'}$, $C.\cdsstep(g_{i'})$
	 		\EndFor
	 		\State Report $d_j$, $C.\cdsstep(d_j)$
	 		\State $i \gets i^*$
	 		\State Increment $j$
	 		\Else
	 		\State Report $g_i$, $C.\cdsstep(g_i)$
	 		\State Increment $i$
	 		\EndIf
	 		\EndWhile
	 		\EndProcedure
	 	\end{algorithmic}
	 \end{algorithm}
	 
	 Correctness of \cref{alg:merge} is easy to see; the algorithm essentially repeatedly outputs the smaller of the two elements $g_i$ and $d_j$ (although it does not always explicitly compare the two). The only non-obvious case is when $d_j \notin X$, but $g_i \in X$. But then we must have $g_i \prec_\pi d_j$ due to the antimatroid constraints, so we can safely report $g_i$ without any comparisons.
	 
	 The running time of \cref{alg:merge} is clearly $\tctotal(C) + \fO(n)$, where $n = |\Sigma| = |\gamma| + |\delta|$. We now analyze the number of comparisons.
	 
	 Consider a single loop iteration in \cref{alg:merge}. In the \emph{else} branch, no comparisons are made, so we can focus on the \emph{if} branch, with $\fO( 1 + \log (i^*-i) )$ comparisons. For each $j$, let $i_j$ and $i^*_j$ be the value of the respective variables in the iteration where $d_j$ is reported (note that this iteration is unique). Clearly, we have $i_j \le i^*_j$ for all $j \in [\ell]$, and $i^*_j \le i_{j+1}$ for each $j \in [\ell-1]$. The overall number of comparisons is
	 \begin{align}
	 	\fO\left( \sum_{j=1}^\ell 1 + \log (i_j^*-i_j) \right).\label{eq:merge:comps}
	 \end{align}
	 
	 The proof idea is that each $d_j$ could be inserted into any place in $g_{i_j} g_{i_j+1} \dots g_{i^*_j-1}$, which gives us enough different permutations so that the information-theoretic bound matches the running time of the exponential searches.
	 
	 More formally, for $j \in [\ell]$, let $M_j$ be the set of words formed by taking $g_{i_j}g_{i_j+1} \dots g_{i^*_j-1}$ and inserting $d_j$ in any of the $i^*_j - i_j + 1$ possible positions (including start and end). Let $\mu'_0 = g_1 g_2 \dots g_{i_1-1}$, let $\mu'_j = g_{i^*_j} g_{i^*_j+1} \dots g_{i_{j+1}-1}$ for $j \in [\ell-1]$ and let $\mu'_\ell = g_{i^*_\ell} \dots g_k$ (note that any $\mu'_j$ may be empty). Finally, let $U = \{ \mu'_0 \mu_1 \mu'_1 \mu_2 \mu'_2 \dots \mu_\ell \mu'_\ell \mid \mu_j \in M_j \text{ for } j \in \ell \}$. See \cref{fig:merge-analysis}.
	 
	 \begin{figure}
	 	\small\centering
	 	\begin{tikzpicture}[x=18mm, anchor=base, -latex]
	 		\node (dummy) at (-13.3mm,0) {};
	 		\node at (-10mm,0) {$\mathllap{\gamma =}$};
	 		\node at (-10mm,-9mm) {$\mathllap{\delta =}$};
	 		\begin{scope}[shift={(0,0)}]
	 			\node at (0,6mm) {$\mu_0'$};
	 			\node at (0,0) {$\overbrace{g_1 \dots g_{i_1-1}}$};
	 		\end{scope}
	 		\newcommand\mugroup[1]{
	 			\node at (0,6mm) {$\mu_{#1}$};
	 			\node at (0,0) {$\overbrace{g_{i_{#1}} \dots g_{i^*_{#1}-1}}$};
	 			\node[circle, inner sep=1pt] (d) at (0,-9mm) {$d_{#1}$};
	 			\draw (d) edge[bend left=25] (-9mm,-2mm);
	 			\draw (d) -- (-4mm,-2mm);
	 			\draw (d) -- (.5mm,-2mm);
	 			\draw (d) edge[bend right=25] node[below right=-.5mm] {\scriptsize$\pi$} (8.5mm,-2mm);
	 		}
	 		\newcommand\muprimegroup[2]{
	 			\node at (0,6mm) {$\mu_{#1}'$};
	 			\node at (0,0) {$\overbrace{g_{i^*_{#1}} \dots g_{i_{#2}-1}}$};
	 		}
	 		\begin{scope}[shift={(1,0)}]
	 			\mugroup{1}
	 		\end{scope}
	 		\begin{scope}[shift={(2,0)}]
	 			\muprimegroup{1}{2}
	 		\end{scope}
	 		\begin{scope}[shift={(3,0)}]
	 			\mugroup{2}
	 		\end{scope}
	 		\begin{scope}[shift={(4,0)}]
	 			\muprimegroup{2}{3}
	 		\end{scope}
	 		\def\xshift{3.5mm}
	 		\node[xshift=\xshift] at (4.5,0) {$\dots$};
	 		\node[xshift=\xshift] at (4.5,-9mm) {$\dots$};
	 		\begin{scope}[shift={(5,0)}, xshift={2*\xshift}]
	 			\mugroup{\ell}
	 		\end{scope}
	 		\begin{scope}[shift={(6,0)}, xshift={2*\xshift}]
	 			\node at (0,6mm) {$\mu_\ell'$};
	 			\node at (0,0) {$\overbrace{g_{i^*_\ell} \dots g_k}$};
	 		\end{scope}
	 	\end{tikzpicture}
	 	\caption{Visualization of permutations in the set $U$. The permutation $\pi$ is obtained by inserting each $d_j$ in the rightmost possible position.}\label{fig:merge-analysis}
	 \end{figure}
	 
	 \begin{lemma}\label{p:merge:u-r}
	 	$U \subseteq R$.
	 \end{lemma}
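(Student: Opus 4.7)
The plan is to fix an arbitrary $\rho \in U$, write it as $\rho = \mu_0' \mu_1 \mu_1' \cdots \mu_\ell \mu_\ell'$, and verify the three defining conditions of $R$: that $\rho$ is a permutation of $\Sigma$, that $\rho\restr{\Gamma} = \gamma$ and $\rho\restr{\Delta} = \delta$, and that $\rho \in \perms(A)$. The first two are immediate from the construction of $U$: in every word of $U$ the $g$-letters appear in the global order $g_1, \dots, g_k$ (since each $\mu_j$ preserves the internal order of $g_{i_j}, \dots, g_{i_j^*-1}$ and the $\mu_j$- and $\mu_j'$-blocks are stitched together in order), and exactly one $d_j$ sits in each $\mu_j$, giving the $d$-order $d_1, \dots, d_\ell$.

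For the real content, $\rho \in \perms(A)$, I would fix an MPS $\{p_x\}_{x \in \Sigma}$ representing $A$ and check that $p_x(W) = 1$ for every letter $x$ of $\rho$, where $W$ is the support of the $\rho$-prefix ending just before $x$. The analysis splits into two cases based on whether $x \in \Gamma$ or $x \in \Delta$. The key observation driving both cases is that $\pi$ itself equals the element of $U$ obtained by placing each $d_j$ at the \emph{rightmost} position of $\mu_j$, whereas in a general $\rho \in U$ each $d_j$ may sit anywhere in $\mu_j$. In particular, the $d_j$'s can only move \emph{leftward} in going from $\pi$ to $\rho$.

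If $x = g_m \in \Gamma$, this observation tells us that any $d_j$ preceding $g_m$ in $\pi$ also precedes $g_m$ in $\rho$, while the set of $g$-letters preceding $g_m$ is the same in both words. Hence $W \supseteq W'$, where $W'$ is the support of the $\pi$-prefix ending just before $g_m$. Since $\pi \in \perms(A)$ we have $p_{g_m}(W') = 1$, and monotonicity of $p_{g_m}$ gives $p_{g_m}(W) = 1$.

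If $x = d_j \in \Delta$, comparing to the $\pi$-prefix is not useful (it is larger than $W$, since $\pi$ delays $d_j$ to the rightmost slot of $\mu_j$). Instead, I would use the algorithm's trace: the test $d_j \in X$ at the start of the iteration in which the algorithm reports $d_j$ ensures $p_{d_j}(W_j^\star) = 1$, where $W_j^\star = \{g_1, \dots, g_{i_j-1}, d_1, \dots, d_{j-1}\}$ is the set reported so far. The $\rho$-prefix before $d_j$ always contains $\mu_0' \mu_1 \mu_1' \cdots \mu_{j-1} \mu_{j-1}'$, whose support equals $W_j^\star$, so $W \supseteq W_j^\star$ and monotonicity gives $p_{d_j}(W) = 1$. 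The main obstacle is just the careful bookkeeping of prefix supports in $\rho$ versus $\pi$; once one pins down that $d_j$'s move only leftward, both cases collapse to an application of monotonicity.
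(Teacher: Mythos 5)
Your proposal is correct and matches the paper's proof in all essentials: for $x = g_m$ you compare prefix supports of $\rho$ and $\pi$ (using that $d_j$'s only move leftward) and apply monotonicity, and for $x = d_j$ you invoke the test $d_j \in X$ in the algorithm, which certifies $p_{d_j}(\Gamma_{i_j-1} \cup \Delta_{j-1}) = 1$, and again apply monotonicity. The only difference is notational.
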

	 \begin{proof}
	 	First observe that for each $\alpha \in U$, we have $\alpha\restr{\Gamma} = \gamma$ and $\alpha\restr{\Delta} = \delta$. Thus, it suffices to show $U \subseteq \perms(A)$. Let $S = \{p_x\}_{x \in \Sigma}$ be a monotone precedence system for $A$, and let $\Gamma_i = \{ g_1, g_2, \dots, g_i \}$ and $\Delta_j = \{ d_1, d_2, \dots, d_j \}$ for $i \in [k]$ and $j \in [\ell]$.
	 	
	 	Let $\alpha \in U$, and let $x \in \Sigma$. Let $E_\alpha$, $E_\pi$ be the set of predecessors of $x$ in $\alpha$, resp.~$\pi$. We claim that $p_{x}(E_\alpha) = 1$. This implies $\alpha \in \perms(S) = \perms(A)$, as required.
	 	
	 	Towards our claim, consider first the case $x = d_j$ for some $j \in [\ell]$. Then $E_\alpha \supseteq \Gamma_{i_j-1} \cup \Delta_{j-1}$. Since $d_j \in X$ at the time the algorithm processes $d_j$, we clearly have $p_{d_j}( \Gamma_{i_j-1} \cup \Delta_{j-1} ) = 1$, implying $p_{d_j}(E_\alpha) = 1$ by monotonicity.
	 	
	 	Now consider $x = g_i$ for some $i \in [k]$. Then $E_\alpha \supseteq E_\pi$, since $\alpha$ only differs from $\pi$ by moving some elements $d_j$ to the left. Since $\pi \in \perms(A)$, we have $p_{g_i}(E_\pi) = 1$, which implies $p_{g_i}(E_\alpha) = 1$ by monotonicity.
	 \end{proof}
	 
	 We are now ready to finish the proof of \cref{p:merge}.
	 
	 \begin{lemma}
	 	\Cref{alg:merge} performs $\fO( \ell + \log |R| )$ comparisons.
	 \end{lemma}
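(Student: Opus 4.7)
The plan is to combine the comparison count expression (\ref{eq:merge:comps}) with the lower bound on $|R|$ provided by \cref{p:merge:u-r}. Starting from
\[
\fO\left( \sum_{j=1}^\ell 1 + \log(i_j^* - i_j) \right) = \fO\left( \ell + \sum_{j=1}^\ell \log(i_j^* - i_j) \right),
\]
it suffices to prove that $\sum_{j=1}^\ell \log(i_j^* - i_j) \in \fO(\log |R|)$.

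To do this, I would first compute $|U|$ directly from its definition. By construction, the sets $M_j$ are disjoint in the sense that they contribute to different positions of any $\alpha \in U$, and each $M_j$ has exactly $i_j^* - i_j + 1$ elements (corresponding to the $i_j^* - i_j + 1$ positions in which $d_j$ can be inserted into $g_{i_j} g_{i_j+1} \dots g_{i_j^*-1}$). Moreover, the $\mu'_j$ are fixed. Hence
\[
|U| = \prod_{j=1}^\ell (i_j^* - i_j + 1),
\]
and taking logarithms yields $\log |U| = \sum_{j=1}^\ell \log(i_j^* - i_j + 1) \ge \sum_{j=1}^\ell \log(i_j^* - i_j)$ whenever the latter is well-defined (if $i_j^* = i_j$ the exp-search uses $\fO(1)$ comparisons, which is already absorbed by the $\ell$ term, so we may freely use the $+1$ version of the bound).

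Combining this with \cref{p:merge:u-r}, which asserts $U \subseteq R$ and thus $|U| \le |R|$, we get $\sum_{j=1}^\ell \log(i_j^* - i_j + 1) \le \log |R|$. Plugging back into the overall comparison bound gives $\fO(\ell + \log |R|)$, as claimed. The argument is essentially a direct combination of an information-theoretic counting lower bound on $|R|$ (via the set $U$) with the matching upper bound on the cost of exponential search; the only nontrivial ingredient, namely that each ``slot'' for $d_j$ genuinely yields a distinct permutation in $R$, has already been handled in \cref{p:merge:u-r}, so no further obstacle is expected.
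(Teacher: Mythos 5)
Your argument is correct and follows the same route as the paper's proof: compute $|U| = \prod_{j=1}^\ell (i_j^* - i_j + 1)$, use $U \subseteq R$ from \cref{p:merge:u-r} to lower-bound $\log |R|$, and plug into \cref{eq:merge:comps}. Your explicit handling of the $i_j^* = i_j$ case is a small extra care point that the paper glosses over, but it does not change the argument.
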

	 \begin{proof}
	 	Observe that $|U| = \prod_{j=1}^\ell |M_j| = \prod_{j=1}^\ell (i_j^*-i_j+1) \ge \prod_{j=1}^\ell (i_j^*-i_j)$. By \cref{p:merge:u-r}, this implies that $\log |R| \ge \sum_{j=1}^\ell \log(i_j^*-i_j)$. From \cref{eq:merge:comps}, we get that the number of comparisons is $\fO( \ell + \log |R| )$.
	 \end{proof}
	
	\subsection{Putting things together}
	
	We now prove \cref{p:main-comp}. First, we compute the bottleneck sequence $\beta$ with \cref{p:bot-seq}. Then, we use \cref{p:sort-subset} to sort the subset $\Gamma = \Sigma \setminus \support{\beta}$. In other words, if $\pi$ is the underlying total order, we compute $\gamma = \pi\restr{\Gamma}$. Finally, we merge $\beta$ and $\gamma$ using \cref{p:merge}. The overall running time is $\fO( \tctotal(C) + \log |\perms(A)| + \log |R| )$, and we use $\fO(\log |\perms(A)| + |\Gamma| + \log |R|)$ comparisons, where $R$ is as defined in \cref{p:merge}. Note that $R \subseteq \perms(A)$. Moreover, we have $|\Gamma| = n - |\beta| \le 2 \log |\perms(A)|$ by \cref{p:bot-seq}. Thus, we need $\fO(\tctotal(C) + \log |\perms(A)|)$ time and $\fO(\log |\perms(A)|)$ comparisons, as desired.

	\section{Limitations and generalizations}\label{sec:generalizations}
	
	It was mentioned in the introduction that topological heapsort is correct for every set of total orders (given as the permutations of an accessible simple language), if a candidate data structure is available. In this section, we exhibit some interesting generalizations of antimatroids where topological heapsort is \emph{not optimal}.
	
	\subsection{Greedoids}
	
	\newcommand{\rot}[1]{#1^{\mathrm{r}}}
	
	Greedoids where first defined by Korte and Lovász~\cite{KorteLovasz1984a} and are generalizations of both matroids and antimatroids. There are two common definitions, one using set systems, and one using simple languages. We give the latter here, for obvious reasons.
	A \emph{greedoid} on an alphabet $\Sigma$ is a simple language $G \subseteq \swords{\Sigma}$, such that the following two axioms hold:
	
	\begin{enumerate}[(i)]
		\item For all $\alpha \in \swords{\Sigma}$ and $x \in \Sigma \setminus \support{\alpha}$, if $\alpha x \in A$, then $\alpha \in A$.\label{prop:gr:acc}
		\item For each $\alpha, \beta \in A$ with $|\alpha| > |\beta|$, there is some $x \in \support{\alpha} \setminus \support{\beta}$ such that $\beta x \in A$.\label{prop:gr:exchange}
	\end{enumerate}
	
	(\ref{prop:gr:acc}) is the usual accessibility axiom. (\ref{prop:gr:exchange}) is a weakening of the second antimatroid axiom: The condition $\support{\alpha} \not\subseteq \support{\beta}$ is replaced by $|\alpha| > |\beta|$.
	
	Consider the following language. Let $\Sigma = \{a_1, a_2, \dots, a_n\}$, and let $\alpha = a_1 a_2 \dots a_n$. For any word $\beta = b_1 b_2 \dots b_k$, define the \emph{rotation} $\rot{\beta} = b_k b_1 b_2 \dots b_{k-1}$. Define $T = \{ \rot{\alpha_1} \rot{\alpha_2} \dots \rot{\alpha_k} \mid \alpha_1 \alpha_2 \dots \alpha_k = \alpha \}$. In words, $T$ contains each permutation obtained by choosing an arbitrary set of non-overlapping subwords of $\alpha$ and rotating them. Note that we can obtain $\alpha$ itself this way by choosing each $\alpha_i$ to have length 1. Let $L$ be the set of all prefixes of words in $P$.
	
	Observe that topological heapsort sorts $P = \perms(L)$ with $\Theta(n \log n)$ comparisons. Indeed, all letters are available at the start and thus are immediately inserted into the queue. Even with the working set property, extracting all elements requires $\Omega(\log n)$ comparisons for a constant fraction of the elements. We now show that $\log |T| \in \Theta(n)$, implying that topological heapsort is not optimal.
	
	\begin{proposition}
		$|T| = 2^{n-1}$.
	\end{proposition}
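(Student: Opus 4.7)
The plan is to exhibit an explicit bijection between the compositions of $n$ (ordered tuples of positive integers summing to $n$) and the elements of $T$. Since the number of compositions of $n$ is well-known to be $2^{n-1}$, this will immediately give the desired count.

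First I would observe that every composition $(k_1, k_2, \dots, k_m)$ of $n$ determines a unique decomposition $\alpha = \alpha_1 \alpha_2 \dots \alpha_m$ with $|\alpha_i| = k_i$, and hence a unique element $\rot{\alpha_1}\rot{\alpha_2}\dots\rot{\alpha_m} \in T$. Conversely, by the definition of $T$, every element of $T$ arises from at least one composition in this way, so the map from compositions to $T$ is surjective. The whole task therefore reduces to proving injectivity.

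The heart of the argument is a recovery procedure that reads the composition off any $\pi \in T$. I would formalize the following observation: if a block $\alpha_i = a_s a_{s+1} \dots a_{s+k-1}$ of length $k \ge 1$ starts at position $s$ in $\alpha$, then after rotation the symbol in position $s$ of $\pi$ is $a_{s+k-1}$ (the last index of the block), and positions $s+1, \dots, s+k-1$ contain $a_s, a_{s+1}, \dots, a_{s+k-2}$, all of which have strictly smaller indices. Thus, scanning $\pi$ from left to right, whenever I arrive at the start $s$ of an as-yet-unassigned block and see $\pi$ contain $a_j$ at position $s$, I can set $k = j - s + 1$ as the block length (and the case $j = s$ correctly yields $k = 1$). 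The next block then starts at position $s + k$, and I iterate. This procedure reconstructs the composition purely from $\pi$ and is therefore deterministic, so no two distinct compositions can map to the same permutation.

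The main obstacle — really the only subtle point — is verifying that the greedy reading above never stalls or misidentifies a block; in particular, I need to rule out that $\pi$ places at the start of a block some $a_j$ with $j < s$. This is handled by the observation above: whatever composition produced $\pi$, the entry at the start of each block is the maximum-index element of that block, so $j \ge s$ always. Combined with surjectivity, the bijection gives $|T| = 2^{n-1}$.
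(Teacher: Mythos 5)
Your proof is correct and follows essentially the same approach as the paper, which counts compositions of $n$ via cut positions and asserts that the decomposition is uniquely recoverable from $\pi$. The paper leaves the uniqueness claim unproven, so your explicit left-to-right decoding procedure (reading off each block length from the maximum-index element that rotation places at its start) is a welcome filling-in of that detail rather than a different route.
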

	\begin{proof}
		Each permutation $\pi \in T$ is uniquely determined by a partition $\alpha_1 \alpha_2 \dots \alpha_k = \alpha$, assuming $|\alpha_i| \ge 1$ for all $i \in [k]$. Each such partition can be produced by cutting $\alpha$ at any subset of the $n-1$ available positions.
	\end{proof}
	
	Second, we show that $L$ is a greedoid, which implies that topological heapsort is not optimal on greedoids (not even with an additive $\fO(n)$ term).
	\begin{proposition}
		$L$ is a greedoid.
	\end{proposition}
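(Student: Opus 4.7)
The plan is to verify the two greedoid axioms. Axiom (\ref{prop:gr:acc}) is immediate, since $L$ is defined as the prefix closure of $T$. The substance is in the exchange axiom (\ref{prop:gr:exchange}).

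The first step will be a structural analysis. I claim that each $\eta \in L$ admits a unique greedy parsing $\eta = \rot{\alpha_1}\cdots\rot{\alpha_{k-1}}\zeta$, where $\alpha_1\cdots\alpha_{k-1} = a_1\cdots a_p$ and $\zeta$ is either empty or a strict prefix of $\rot{\alpha_k}$ for some block $\alpha_k = a_{p+1}\cdots a_j$ of length $\ge 2$. This follows by induction on $|\eta|$: since the rotation of a block places its last letter in front, the first letter of an as-yet-unstarted block uniquely determines that block, while within an incomplete rotation the next letter is forced. From this parsing one reads off $\Cont(L, \eta) = \{x \mid \eta x \in L\}$: when $\zeta = \emptyword$ the next letter may start any new block, so $\Cont(L, \eta) = \{a_{p+1}, \dots, a_n\}$; when $|\zeta| = \ell \ge 1$ the next letter must continue $\rot{\alpha_k}$, so $\Cont(L, \eta) = \{a_{p+\ell}\}$.

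The second step is a positional lemma: for every $\eta \in L$ and every index $m$ with $a_m \notin \support{\eta}$, one has $|\eta| \le m$. I will prove this by embedding $\eta$ as a prefix of some $\rot{\beta_1}\cdots\rot{\beta_{k'}} \in T$ and locating $a_m$ in the block $\beta_i = a_s\cdots a_t$ containing it. A direct calculation shows that the overall position of $a_m$ is $s$ when $m = t$ and $m+1$ when $s \le m < t$; since $|\eta|$ is strictly less than this position and both values are at most $m+1$, one obtains $|\eta| \le m$.

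Finally, for exchange, let $\eta_1, \eta_2 \in L$ with $|\eta_1| > |\eta_2|$ and parse $\eta_2$. If $\eta_2$ is in the block-boundary state with $\support{\eta_2} = \{a_1,\dots,a_p\}$, then $|\eta_1| > p$ together with simplicity of $\eta_1$ forces some $a_m \in \support{\eta_1}$ with $m > p$, and this $a_m \in \Cont(L, \eta_2) \setminus \support{\eta_2}$ witnesses the axiom. Otherwise $\Cont(L, \eta_2) = \{a_{p+\ell}\}$ with $|\eta_2| = p + \ell$; if $a_{p+\ell}$ were missing from $\support{\eta_1}$, the positional lemma would yield $|\eta_1| \le p+\ell = |\eta_2|$, contradicting $|\eta_1| > |\eta_2|$, so $a_{p+\ell}$ lies in $\support{\eta_1} \setminus \support{\eta_2}$ as required. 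The main obstacle will be establishing uniqueness of the parsing cleanly (handling the mild ambiguity when a length-$1$ block closes) and keeping the index bookkeeping in the positional lemma straight.
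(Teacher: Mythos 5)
Your proof is correct and takes essentially the same approach as the paper: a case split on whether the shorter word ends at a block boundary, driven by the observation that a length-$\ell$ word in $L$ has support $\{a_1,\dots,a_{\ell-1}\}$ together with one further element of index $\ge \ell$. Your positional lemma is the correct strengthening of the paper's stated observation (the paper's version claims only $a_{\ell-1}\in\support{\beta}$, which is a little weaker than what its own second case actually needs when $|\gamma|>\ell+1$), and your explicit greedy parsing of $L$ simply spells out the block-boundary dichotomy that the paper treats more tersely.
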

	\begin{proof}
		By definition, $L$ is accessible. We need to prove axiom (\ref{prop:gr:exchange}). We start with the following observation. For each word $\beta \in L$ with length $|\beta| = \ell$, we have $a_{\ell-1} \in \support{\beta}$, and $a_p \in \support{\beta}$ for some $p \ge \ell$.
		
		Now take some $\beta, \gamma \in L$ with $\ell \coloneq |\beta|$ and $|\gamma| > \ell$. We need to show that there is some $x \in \support{\gamma}$ such that $\beta x \in L$. First, suppose $\beta = \rot{\alpha_1} \rot{\alpha_2} \dots \rot{\alpha_k}$ for some prefix $\alpha_1 \alpha_2 \dots \alpha_k$ of $\alpha$ (i.e., $\beta$ ends at a ``cut position''). By our observation and since $|\gamma| \ge \ell+1$, we have $a_p \in \support{\gamma}$ for some $p \ge \ell+1$. Since $\beta a_p \in L$, the axiom holds.
		
		Now suppose that $\beta = \rot{\alpha_1} \rot{\alpha_2} \dots \rot{\alpha_k} a_p a_q a_{q+1} \dots a_{\ell-1}$ for some $p > \ell$ and $q \le \ell$ (i.e., $\beta$ does \emph{not} end at a ``cut position''). Then $\beta a_{\ell} \in L$, and $a_\ell \in \support{\gamma}$ by our observation, so we are done.
	\end{proof}
	
	The author has not been able to determine whether the ITB is tight for small greedoids (with $2^{o(n)}$ permutations); this is an interesting open question.
	
	\subsection{Non-monotone precedence systems}
	
	We now explore two generalizations of antimatroids obtained by removing one of the two conditions given in our informal definition (see \cref{sec:intro}), namely monotonicity and order-insensitivity. We start with the first. A \emph{non-monotone precedence system (NMPS)} $S = \{p_x\}_{x \in \Sigma}$ is defined just like a monotone precedence system, except that the precedence functions need not be monotone. In particular, we define the language generated by $S$ in the same way:
	\[ \Lang(S) = \{ x_1 x_2 \dots x_k \mid \forall i \in [k] : p_{x_i}(\{x_1, x_2, \dots, x_{i-1}\}) = 1 \}. \]
	
	Removing the monotonicity condition obviously increases the power of our model considerably; however, observe that not all languages are generated by an NMPS. For example, any NMPS-language on $\Sigma = \{a,b,c,d\}$ that contains the permutations $abcd$, $abdc$, and $bacd$ must also contain $badc$.
	
	We show that all greedoids are NMPS-languages, which immediately implies that topological heapsort is not optimal for NMPS-languages.
	
	\begin{proposition}
		Let $G$ be a greedoid. Then there exists an NMPS that generates $G$.
	\end{proposition}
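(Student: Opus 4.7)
The plan is to build the NMPS $S = \{p_x\}_{x \in \Sigma}$ directly from $G$ by defining, for each $x \in \Sigma$ and $X \subseteq \Sigma \setminus \{x\}$,
\[
p_x(X) = 1 \iff \exists\, \alpha \in G \text{ with } \support{\alpha} = X \text{ and } \alpha x \in G.
\]
Of course this $p_x$ need not be monotone, which is fine. I then have to show $\Lang(S) = G$.

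The main obstacle is the ``completeness'' direction $\Lang(S) \subseteq G$: given $\alpha = x_1 x_2 \dots x_k \in \Lang(S)$, the definition of $p_{x_k}$ only guarantees \emph{some} word $\beta \in G$ with $\support{\beta} = \{x_1, \dots, x_{k-1}\}$ and $\beta x_k \in G$, but I need $x_1 \dots x_{k-1} x_k$ itself to lie in $G$. To bridge this gap I would establish the key lemma that in any greedoid, continuations depend only on the support: if $\alpha, \beta \in G$ with $\support{\alpha} = \support{\beta}$ and $\alpha x \in G$, then also $\beta x \in G$. This follows from greedoid axiom (\ref{prop:gr:exchange}) applied to $\alpha x$ and $\beta$: we have $|\alpha x| > |\beta|$, so some $y \in \support{\alpha x} \setminus \support{\beta}$ admits $\beta y \in G$; but $\support{\alpha x} \setminus \support{\beta} = \{x\}$, forcing $y = x$.

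With the lemma in hand, the two inclusions are routine. For $G \subseteq \Lang(S)$, take $\alpha = x_1 \dots x_k \in G$; by repeated application of accessibility (axiom (\ref{prop:gr:acc})), each prefix $x_1 \dots x_{i-1}$ is in $G$ and extends with $x_i$, so $p_{x_i}(\{x_1, \dots, x_{i-1}\}) = 1$. For $\Lang(S) \subseteq G$, I induct on $k$: the empty word is in $G$, and for $\alpha = x_1 \dots x_k \in \Lang(S)$ the induction gives $\alpha' = x_1 \dots x_{k-1} \in G$, while $p_{x_k}(\support{\alpha'}) = 1$ supplies some $\beta \in G$ with $\support{\beta} = \support{\alpha'}$ and $\beta x_k \in G$; the key lemma then transfers this continuation from $\beta$ to $\alpha'$, yielding $\alpha = \alpha' x_k \in G$.
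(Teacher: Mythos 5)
Your proof is correct and matches the paper's approach: same definition of $p_x$ (the extra requirement $\alpha \in G$ is redundant by accessibility), and the same use of axiom (\ref{prop:gr:exchange}) applied to a length-$|\beta|+1$ word and $\beta$ to force the unique exchange element to be $x$. The only cosmetic difference is that you isolate this step as a ``continuations depend only on support'' lemma and run the completeness direction by induction, whereas the paper does it inline via a minimal counterexample—both are the same argument.
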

	\begin{proof}
		Define the NMPS $S = \{p_x\}_{x \in \Sigma}$ as follows. Let $p_x(Y) = 1$ if and only if there exists a word $\alpha \in \swords{\Sigma}$ such that $\support{\alpha} = Y$ and $\alpha x \in G$.
		It is clear that $G \subseteq \Lang(S)$. We show that $\Lang(S) \subseteq G$.
		
		Suppose for the sake of contradiction that $\Lang(S) \not\subseteq G$, and take a minimal word $\beta \coloneq \alpha x \in \Lang(S) \setminus G$, with $\alpha \in G$ and $x \in \Sigma \setminus \support{\alpha}$. Then $p_x(\support{\alpha}) = 1$, so by definition, there exists an $\alpha' \in \swords{\Sigma}$ such that $\support{\alpha}' = \support{\alpha}$ and $\alpha' x \in G$. Using axiom (\ref{prop:gr:exchange}) with $\alpha$ and $\alpha'x$, we get that $\alpha x \in G$, a contradiction.
	\end{proof}
	
	However, not all NMPS-languages are greedoids. One example is the set of prefixes of $\{abcd, dcba\}$.
	
	Finally, it interesting to note that the information-theoretic lower bound is not tight for NMPS-languages. Indeed, let $\alpha = a_1 a_2 \dots a_n$ and consider the set $T$ of permutations on $\Sigma = \support{\alpha}$ that are obtained by moving some element of $\alpha$ to the start, i.e., let $T = \{ \alpha_i \mid i \in [n] \}$ with $\alpha_i = a_i a_1 a_2 \dots a_{i-1} a_{i+1} a_{i+2} \dots a_n$. Let $L$ be the set of prefixes of permutations in $T$. Then $L$ is generated by the NMPS $S = \{p_x\}_{x \in \Sigma}$ with $p_{a_i}(X) = 1$ iff $X = \emptyset$ or $X \supseteq \{a_1, a_2, \dots, a_{i-1}\}$.
	
	We now show that $n-1$ comparisons are required to sort $T$, even though the ITB is $\log |T| = \log n$. Consider an adversary that always answers comparisons consistently with the permutation $\alpha$. Every comparison $a_i < a_j$ for $i < j$ is consistent with every permutation except possibly $\alpha_j$. Thus, any sorting algorithm needs $n-1$ comparisons to eliminate all but one permutation.
	
	\subsection{Order-sensitive antimatroids}
	
	We now consider discarding the second property of antimatroids, that availability of an element may only depend on the \emph{set} of elements chosen so far, not their order. This requirement may seem somewhat arcane, but we show that it is necessary for an algorithm like topological heapsort to work.
	
	Let an \emph{order-sensitive monotone precedence system (OSMPS)} on a set $\Sigma$ be a set of functions $P = \{p_x\}_{x \in \Sigma}$ with $p_x \colon \swords{(\Sigma \setminus \{x\})} \rightarrow \{0,1\}$, such that each $p_x$ is \emph{monotone}, that is, for all $\alpha \in \swords{\Sigma}$ and $y \in \Sigma \setminus \support{\alpha}$, we have $p_x(\alpha) \le p_x(\alpha y)$. The \emph{language} of $P$ is defined as
	\[
	\Lang(P) = \{ x_1 x_2 \dots x_k \in \swords{\sigma} \mid \forall i \in [k] : p_{x_i}(x_1 x_2 \dots x_{i-1}) = 1. \}
	\]
	
	We now give an example of a set $T$ of permutations that can be described by an OSMPS, but is not an antimatroid, and moreover, topological heapsort does not optimally solve the $T$-sorting problem.
	
	Let $k \in \N$ and $n = k^2$, and let $\Sigma = \{ x_{i,j} \mid i,j \in [k] \}$. For each $i \in [k]$, define $\alpha_i = x_{i,1} x_{i,2} \dots x_{i,k}$, and let $A_i$ be the set of all permutations on $\{x_{i,1}, x_{i,2}, \dots, x_{i,k}\}$. Finally, let
	\[
	T = \{ \alpha_1 \alpha_2 \dots \alpha_{i-1} \beta_i \alpha_{i+1} \dots \alpha_k \mid i \in [k], \beta_i \in A_i \}.
	\]
	
	It is not hard to construct an OSMPS $S$ such that $T = \perms(S)$. On the other hand, $T$ is not an antimatroid: All elements are available at the start, so every antimatroid containing $T$ necessarily contains every permutation.
	
	The number of comparisons that topological heapsort takes for the $T$-sorting problem is $\Omega(n \log k)$ if the underlying total order is $\pi = \alpha_1 \alpha_2 \dots \alpha_n$. Indeed, at the start, all $k$ elements of $\support{\alpha}_1$ are inserted into the queue and then taken out in sorted order; this first phase takes $\Theta(k \log k)$ comparisons. After removing $x_{1,k}$, all elements of $\support{\alpha}_2$ are inserted, and the process repeats $k-1$ times, using $\Theta(k^2 \log k) = \Theta(n \log k)$ comparisons in total.
	
	However, observe that $|T| = k \cdot k!$, so $\log |T| \approx k \log k \approx \sqrt{n} \log n$. Thus, topological heapsort is not optimal. Also, the number of comparisons needed to sort $T$ is easily seen to be $\Theta(n)$; hence, the ITB is not tight for OSMPSs.

	\section{Conclusion}
	
	In this paper, we have shown that the basic variant of \emph{topological heapsort} nicely generalizes to the problem of sorting antimatroids, and its running time stays optimal. We also showed how the algorithm can be modified to be comparison-optimal in the antimatroid case, which in particular implies that the information-theoretic bound is tight for antimatroids. 
	Since topological heapsort is not optimal for further generalizations like \emph{greedoids}, perhaps antimatroids are the most general structure that can be sorted ``greedily''. The question whether the information-theoretic bound for greedoids holds, and whether greedoids can be sorted efficiently in some other way, is left open.
	
	For another interesting open question, consider an important class of antimatroids that was omitted from \cref{sec:app}.
	\emph{Convex shelling antimatroids}~\cite{Dietrich1987,KorteSchraderEtAl1991,BjoernerZiegler1992} are obtained by progressively removing points from the convex hull of a given point set. A candidate data structure for such an antimatroid would be a certain kind of \emph{decremental convex hull} data structure. The total running time of this data structure cannot be linear as in the other examples, since computing the convex hull can take superlinear time~\cite{KirkpatrickSeidel1986}. Is there a candidate data structure for the convex shelling antimatroid that is fast enough to sort optimally?

	\bibliography{info}

	\appendix
	
	\section{Monotone precedence systems characterize antimatroids}\label{app:mps-char}
	
	Recall the two axioms that characterize an antimatroid $A$:
	\begin{enumerate}[(i)]
		\item For all $\alpha \in \swords{\Sigma}$ and $x \in \Sigma \setminus \support{\alpha}$, if $\alpha x \in A$, then $\alpha \in A$.
		\item For each $\alpha, \beta \in A$ with $\support{\alpha} \not\subseteq \support{\beta}$, there is some $x \in \support{\alpha} \setminus \support{\beta}$ such that $\beta x \in A$.
	\end{enumerate}
	
	\begin{restatable}{proposition}{restateMPSChar}\label{p:mps-char}
		A non-empty simple language $A \subseteq \swords{\Sigma}$ is an antimatroid if and only if there exists an MPS $S$ on $\Sigma$ with $\Lang(S) = A$.
	\end{restatable}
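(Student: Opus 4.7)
The easy direction is to show that if $S$ is an MPS, then $A := \Lang(S)$ satisfies the two antimatroid axioms. Accessibility (\ref{prop:am:acc}) follows immediately from the definition of $\Lang(S)$, since the membership criterion for $\alpha x$ contains the criterion for $\alpha$ as a sub-conjunction. For the exchange axiom (\ref{prop:am:exchange}), given $\alpha, \beta \in A$ with $\support{\alpha} \not\subseteq \support{\beta}$, I would write $\alpha = a_1 a_2 \dots a_m$ and pick the smallest index $i$ with $a_i \notin \support{\beta}$. Then $\{a_1, \dots, a_{i-1}\} \subseteq \support{\beta}$, and since $\alpha \in \Lang(S)$ we have $p_{a_i}(\{a_1, \dots, a_{i-1}\}) = 1$; monotonicity then yields $p_{a_i}(\support{\beta}) = 1$, so $\beta a_i \in \Lang(S)$ as required.

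For the harder direction, let $A$ be a non-empty antimatroid and define the feasible set system $\mathcal{F} = \{\support{\alpha} \mid \alpha \in A\}$. I would define $S = \{p_x\}_{x \in \Sigma}$ by
\[
	p_x(Y) = 1 \iff \exists X \in \mathcal{F} : X \subseteq Y \text{ and } X \cup \{x\} \in \mathcal{F}.
\]
Monotonicity is immediate: a witness $X$ for $Y$ is also a witness for any $Y' \supseteq Y$. It remains to verify $\Lang(S) = A$. The inclusion $A \subseteq \Lang(S)$ is routine: for $\alpha = x_1 x_2 \dots x_n \in A$, accessibility gives $\{x_1, \dots, x_{i-1}\}, \{x_1, \dots, x_i\} \in \mathcal{F}$, which witnesses $p_{x_i}(\{x_1, \dots, x_{i-1}\}) = 1$.

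The main obstacle is the reverse inclusion $\Lang(S) \subseteq A$, where axiom (\ref{prop:am:exchange}) really earns its keep. I would proceed by induction on word length, using that $\emptyword \in A$ (apply accessibility repeatedly to any word in the non-empty $A$). Assume $\alpha_{i-1} := x_1 \dots x_{i-1} \in A$ and $p_{x_i}(\support{\alpha_{i-1}}) = 1$. By definition of $p_{x_i}$, there is some $\beta \in A$ with $\support{\beta} = X \cup \{x_i\}$ for some $X \subseteq \support{\alpha_{i-1}}$. Apply axiom (\ref{prop:am:exchange}) with $\beta$ playing the role of ``$\alpha$'' and $\alpha_{i-1}$ playing the role of ``$\beta$'': the hypothesis $\support{\beta} \not\subseteq \support{\alpha_{i-1}}$ holds because $x_i \in \support{\beta} \setminus \support{\alpha_{i-1}}$, and because $X \subseteq \support{\alpha_{i-1}}$, the set $\support{\beta} \setminus \support{\alpha_{i-1}}$ collapses to exactly $\{x_i\}$. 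Hence the element produced by the axiom can only be $x_i$ itself, giving $\alpha_{i-1} x_i \in A$ and closing the induction.

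The delicate point worth emphasizing is that $p_x$ must be defined on all subsets of $\Sigma \setminus \{x\}$, not merely on those arising as supports of words in $A$; the ``$\exists X \subseteq Y$'' quantifier in the definition is what guarantees monotonicity extends correctly off $\mathcal{F}$, while simultaneously being loose enough to not generate spurious words, thanks to the exchange axiom collapsing the witness to the unique desired letter.
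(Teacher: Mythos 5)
Your proof is correct and follows essentially the same route as the paper's, differing only in that you define $p_x$ via the feasible set system $\mathcal{F}$ while the paper defines it via words $\alpha x \in A$ with $\support{\alpha} \subseteq Y$ (an equivalent formulation). The structural steps --- induction on length for $\Lang(S)\subseteq A$, invoking the exchange axiom so that $\support{\beta}\setminus\support{\alpha_{i-1}}$ collapses to $\{x_i\}$, and, in the converse direction, picking the longest prefix of $\alpha$ whose support lies in $\support{\beta}$ and then applying monotonicity --- all match the paper's argument.
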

	\begin{proof}
		We start with the ``only if'' part. Let $A$ be an antimatroid. Define the monotone precedence system $S = \{p_x\}_{x \in \Sigma}$ as $p_x(X) = 1$ iff there exists a word $\alpha x \in A$ such that $\support{\alpha} \subseteq X$. We show that $\Lang(S) = A$. By definition, we have $A \subseteq \Lang(S)$. Now take some word $\alpha = a_1 a_2 \dots a_k \in \Lang(S)$. We show $\alpha \in A$ by induction on $k$.
		
		If $k = 0$, then $\alpha = \emptyword$. Every non-empty antimatroid contains the empty word by axiom~(\ref{prop:am:acc}).
		Now suppose $k > 0$, and let $\alpha' = a_1 a_2 \dots a_{k-1}$. By definition, we have $\alpha' \in \Lang(S)$, so $\alpha' \in A$ by induction. Further observe that $\alpha \in \Lang(S)$ implies that $p_{a_k}(\support{\alpha}') = 1$. By definition of $S$, this means there must be some $\alpha''$ with $\support{\alpha}'' = \support{\alpha}'$ and $\alpha'' a_k \in A$. Finally, using axiom (\ref{prop:am:exchange}) with $\alpha'$ and $\alpha'' a_k$ implies that $\alpha = \alpha' a_k \in A$.
		
		We now show the ``if'' part. Let $S = \{p_x\}_{x \in \Sigma}$ be a monotone precedence system. We need to show that $\Lang(S)$ is an antimatroid. Axiom (\ref{prop:am:acc}) holds by definition.
		
		To prove axiom (\ref{prop:am:exchange}), let $\alpha, \beta \in \Lang(S)$ with $\support{\alpha} \not\subseteq \support{\beta}$. Let $\gamma$ be the longest prefix of $\alpha$ such that $\support{\gamma} \subseteq \support{\beta}$. Then, there exists $x \in \Sigma \setminus \support{\beta}$ and $\alpha' \in \swords{\Sigma}$ such that $\alpha = \gamma x \alpha'$. We claim that $\beta x \in \Lang(S)$. Since $\beta \in \Lang(S)$, it is enough to show that $p_x(\support{\beta}) = 1$.
		
		Since $\gamma x \alpha' \in \Lang(S)$, we also have $\gamma x \in \Lang(S)$, which implies $p_x(\support{\gamma}) = 1$. Since $\support{\gamma} \subseteq \support{\beta}$, monotonicity of $p_x$ implies $p_x(\support{\beta}) = 1$, as desired.
	\end{proof}

	We have not been able to find the characterization of antimatroids via monotone precedence systems in the literature; however, a slightly different formalism called \emph{alternative precedence systems} is known to describe antimatroids~\cite{BjoernerZiegler1992} (cf.\ \cref{sec:app:formulae}). Equivalence between alternative precedence systems and monotone precedence systems is quite easy to see.
	
	\section{Vertex search and distance orderings}\label{app:dijkstra}
	
	Given an edge-weighted directed graph $(G,w)$ with a designated \emph{source} $s$, a \emph{distance ordering} of $(G,s,w)$ is an ordering of the vertices according to distance from $s$, with ties broken arbitrarily. We assume all edge weights are positive and every vertex is reachable from $s$. Let $D_{G,s}$ denote the set of possible distance orderings produced by a weight function $w$ on $G$.
	We now show that $D_{G,s}$ is exactly the set of permutations of the vertex search antimatroid $\AMVS_{G,s}$ of $G$ with root $s$ (see \cref{sec:dist-order}).
	
	\begin{proposition}\label{p:do-vs}
		Let $G$ be a directed graph and let $s \in V(G)$ such that each vertex of $G$ is reachable from $s$. Then $D_{G,s} = \perms(\AMVS_{G,s})$.
	\end{proposition}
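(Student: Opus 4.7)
The plan is to establish the two inclusions separately, relying on the assumption that every vertex is reachable from $s$ and all weights are positive.

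For the inclusion $D_{G,s} \subseteq \perms(\AMVS_{G,s})$, I would take any distance ordering $\pi = v_1 v_2 \dots v_n$ (with $v_1 = s$) arising from some weight function $w$, and show that each prefix $v_1 v_2 \dots v_i$ induces a subgraph in which every vertex is reachable from $s$. The key observation is that since all edge weights are strictly positive, every internal vertex of a shortest $s$-to-$v_i$ path has a strictly smaller distance from $s$ than $v_i$ does, and therefore already appears in the prefix. So the shortest path itself lies entirely inside $v_1 \dots v_i$, witnessing reachability.

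For the inclusion $\perms(\AMVS_{G,s}) \subseteq D_{G,s}$, I would go in the opposite direction: given $\pi = v_1 v_2 \dots v_n \in \perms(\AMVS_{G,s})$, I would exhibit a weight function $w$ whose induced distance ordering is $\pi$. The construction is to set, for each edge $(v_j, v_k) \in E(G)$,
\[
w(v_j, v_k) = \begin{cases} k - j & \text{if } j < k, \\ n & \text{if } j \ge k, \end{cases}
\]
and then verify by induction on $k$ that $\mathrm{dist}_w(s, v_k) = k-1$. The upper bound uses the antimatroid property: since $v_1 \dots v_k$ is in $\AMVS_{G,s}$, there exists an edge $(v_j, v_k)$ with $j < k$, and then the inductive hypothesis together with $w(v_j, v_k) = k - j$ gives $\mathrm{dist}_w(s, v_k) \le (j-1) + (k-j) = k-1$. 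For the lower bound, any $s$-to-$v_k$ path composed entirely of forward edges telescopes to total weight exactly $k-1$, while any path containing even a single backward edge has weight $\ge n \ge k-1$. Since the resulting distances $0, 1, \dots, n-1$ are all distinct, $\pi$ is the unique distance ordering induced by $w$.

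I do not expect a serious obstacle; the main thing to be careful about is the lower-bound argument in the second direction, where one must ensure that a backward edge cannot be ``cancelled'' by cheap forward edges. Fixing the backward weight to $n$ (rather than, say, $1$) makes this immediate, since every path weight is positive and a single backward edge already pushes the sum beyond the desired $k-1$.
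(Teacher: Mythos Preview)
Your proposal is correct and essentially identical to the paper's proof: both directions use the same ideas (the penultimate vertex on a shortest path lies in the prefix by positivity of weights; the weight function $w(v_j,v_k)=k-j$ for forward edges and $n$ for backward edges realizes $\pi$ as the unique distance ordering). The only cosmetic difference is that for the first inclusion you argue the entire shortest path lies in the prefix, whereas the paper uses the MPS description and only invokes the immediate in-neighbor; these are equivalent.
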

	\begin{proof}
		In the following, let $d(u,v)$ denote the distance between two vertices $u$ and $v$. Recall that $\AMVS_{G,s}$ is characterized by the monotone precedence system $S = \{p_x\}_{x \in \Sigma}$ with $p_x(X) = 1$ if $x = s$, or otherwise if $X$ contains at least one in-neighbor of $x$ in $G$.
		
		We first show that $D_{G,s} \subseteq \perms(\AMVS_{G,s})$. Let $\pi = v_1 v_2 \dots v_n$ with $v_1 = s$ be a distance ordering of $G$ with some weight function $w$. Let $i \ge 2$.
		We need to show that $p_{v_i}(V_{i-1}) = 1$, where $V_{i-1} = \{v_1, v_2, \dots, v_{i-1}\}$.
		
		Let $u_1 u_2 \dots u_k$ be a shortest path from $s = u_1$ to $v_i = u_k$. Then $d(s,u_{k-1}) < d(s,v_i)$, since all weights are positive. This implies $u_{k-1} \in V_{i-1}$. Since there is an edge from $u_{k-1}$ to $u_k = v_i$, we have $p_{v_i}(V_{i-1}) = 1$, as required.
		
		We now show that $D_{G,s} \supseteq \perms(\AMVS_{G,s})$.
		Let $\pi = v_1 v_2 \dots v_n \in \perms(\AMVS_{G,s})$. We first show that for each $v$, there exists a path from $s$ to $v$ that is monotonously increasing w.r.t.\ $\prec_\pi$. Indeed, for each $v_i \neq s$, we have $p_{v_i}(\{v_1, v_2, \dots, v_{i-1}\} = 1$, which implies that one of $v_1, v_2, \dots, v_{i-1}$ is an in-neighbor of $v_i$. By induction, the claimed monotone path exists.
		
		We now define an edge-weight function $w$ as $w(v_i,v_j) = j-i$ if $j > i$, and $w(v_i,v_j) = n$ if $i < j$. From the existence of $\pi$-monotone paths, we have $d(s, v_i) \le i-1$ for each $i \in [n]$. Moreover, each non-monotone path has length at least $n$. This implies that $d(s, v_i) = i-1$ for each $i \in [n]$ so $L$ is indeed a distance ordering.
	\end{proof}
	
	\subsection{Universal optimality of Dijkstra's algorithm}\label{sec:dijkstra}
	
	\newcommand{\UODijPaper}{Haeupler, Hladík, Rozhoň, Tarjan, and Tětek~\cite{HaeuplerHlad'ikEtAl2024}}
	\newcommand{\UODijSimpPaper}{van der Hoog, Rotenberg, and Rutschman~\cite{HoogRotenbergEtAl2025}}
	
	\Cref{p:do-vs} can be used to give a proof of universal optimality of Dijkstra's algorithm. We give the details below. We will again avoid formally defining the working-set property for the proof itself, though we discuss two of its variants in \cref{sec:wsprops}.
	
	The \emph{distance ordering problem} (DOP) consists of finding a distance ordering of an input graph $(G,s,w)$. Fixing $G$ and $s$, the information-theoretic lower bound tells us that any algorithm solving DOP on $(G,s)$ needs to perform at least $\log |D_{G,s}|$ comparisons in the worst-case (i.e., with a worst-case weight function $w$).\footnote{We use here the simple observation that every $\pi \in D_{G,s}$ is the \emph{unique} distance ordering of $(G,s,w)$ for some $w$.}
	An algorithm that solves DOP with running time $\fO( |E(G)| + \log |D_{G,s}| )$ for each input $(G,s,w)$ is called \emph{universally optimal}. \UODijPaper{} showed that Dijkstra's algorithm equipped with a working-set heap is univerally optimal, and \UODijSimpPaper{}, independently from this paper, showed that a weaker working-set property is sufficient.\footnote{Both papers also showed comparison optimality, which we omit here.} Our proof works with the latter, simple working-set property (see \cref{sec:wsprops}).
	
	For Dijkstra's algorithm, our priority queue $Q$ contains vertex-distance pairs $(v,d)$. We assume that the following operations are available: $Q.\Call{extract-min}{}$ returns $(v,d)$ with minimum $d$; $Q.\Call{insert}{v,d}$ inserts $(v,d)$ into the queue, assuming $v$ is not in the queue; $Q.\Call{decrease-key}{v,d}$ sets the distance associated to $v$ to $d$, assuming $(v,d')$ is contained in $Q$ with $d' \ge d$; and $Q.\Call{get-key}{v}$ returns $d$ if $(v,d)$ is in $Q$, or returns $\bot$ if $v$ is not in $Q$. The last operation is not usually defined for priority queues, but can easily be implemented in $\fO(1)$ time with a table lookup.
	
	\begin{algorithm}
		\caption{Dijkstra's algorithm, adapted for the distance-ordering problem.}
		\begin{algorithmic}
			\Statex \textbf{Input:} Digraph $G$, starting vertex $s$, edge weight function $w \colon E(G) \rightarrow \R_+$
			\State Initialize working-set priority queue $Q$ with $\{(s,0)\}$
			\While{$Q$ is not empty}
				\State $(u,d) \gets Q.\Call{extract-min}{}()$
				\State Report $u$
				\ForEach{$v \in V(G)$, $(u,v) \in E(G)$}
					\If{$Q.\Call{get-key}{v} = \bot$}
						\State $Q.\Call{insert}{v,d+w(u,v)}$
					\ElsIf{$Q.\Call{get-key}{v} > d + w(u,v)$}
						\State $Q.\Call{decrease-key}{v, d+w(u,v)$}
					\EndIf
				\EndFor
			\EndWhile
		\end{algorithmic}
	\end{algorithm}
	
	The \emph{transcript} of a run of Dijkstra's algorithm is defined in the same way as for topological heapsort, i.e., as the sequence $Q_0, Q_1, \dots, Q_n$, where $Q_0$ is the set of elements initially in $Q$, and $Q_i$ is the set of elements in $Q$ after the $i$-th step. The following lemma is critical for our proof.
	
	\begin{lemma}\label{p:dijkstra-transcript}
		Consider an input $(G,s,w)$ with a unique distance ordering $\pi$. Then, running Dijkstra's algorithm on $(G,s,w)$ and running topological heapsort on the vertex search antimatroid $\AMVS_{G,s}$ with a comparison oracle for $\pi$ produces the same transcript.
	\end{lemma}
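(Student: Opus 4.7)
The plan is to prove $Q_i^D = Q_i^T$ for each $i \in \{0,1,\ldots,n\}$ by induction on $i$, where $Q_i^D$ and $Q_i^T$ denote the vertex sets in the queue after the $i$-th step of Dijkstra's algorithm and of topological heapsort, respectively. The base case is immediate, since both algorithms initialize their queue to $\{s\}$. Writing $\pi = v_1 v_2 \dots v_n$ and assuming $Q_{i-1}^D = Q_{i-1}^T =: Q_{i-1}$, one then has to argue two things for the $i$-th iteration: (a) the vertex extracted from the queue is $v_i$ in both algorithms, and (b) the same set of vertices is inserted.

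For (a), in topological heapsort the $\prec_\pi$-minimum of $Q_{i-1}$ is $v_i$, because $v_1,\dots,v_{i-1}$ have already been removed. For Dijkstra, I would invoke the classical invariant that the key of the min-key vertex $v \in Q$ equals $d(s,v)$ at the time of extraction, together with the inductive assumption that $v_1,\ldots,v_{i-1}$ are precisely the vertices extracted so far. Since $\pi$ is the \emph{unique} distance ordering, the distances $d(s,v_1),\dots,d(s,v_n)$ are pairwise distinct and strictly increasing in $\pi$, so the minimum-key vertex in $Q_{i-1}$ is the unprocessed vertex minimizing $d(s,\cdot)$, which is $v_i$. For (b), after extracting $u = v_i$ both algorithms scan the out-neighbors of $u$ in $G$. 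In topological heapsort, such a $v$ is inserted iff $v \notin Q_{i-1} \cup \{v_1,\dots,v_i\}$ and no earlier $v_j$ with $j < i$ is an in-neighbor of $v$; but the second condition is automatic once the first holds, because otherwise $v$ would already have been inserted at an earlier step, contradicting $v \notin Q_{i-1}\cup\{v_1,\dots,v_{i-1}\}$. In Dijkstra, $v$ is inserted iff $v \notin Q_{i-1} \cup \{v_1,\dots,v_i\}$ (decrease-key operations and redundant updates on already-extracted vertices do not change the set of vertices present in $Q$). Since $Q_{i-1}$ and the extracted prefix coincide by induction, both algorithms perform exactly the same insertions, and therefore $Q_i^D = Q_i^T$.

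The main obstacle is the careful bookkeeping in part (b): one must handle the case where an out-neighbor $v$ of $u = v_i$ was already extracted (which is possible, since an edge $(u,v)$ is compatible with $d(s,v) < d(s,u)$), and argue that in both algorithms such $v$ is correctly ignored. For topological heapsort this is built into the \textbf{for each} guard $v \in \Sigma \setminus (\support{\alpha} \cup Q)$, while for Dijkstra it requires the standard convention that already-settled vertices are not re-inserted. With this convention in place, the two update rules operate on the same set of candidates, and the inductive step goes through cleanly.
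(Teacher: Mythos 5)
Your proposal is correct and takes essentially the same approach as the paper's (quite terse) proof, which gives a closed-form description of the topological-heapsort transcript --- $Q_i$ is the set of not-yet-extracted vertices with an in-neighbor among $\{x_1,\dots,x_i\}$ --- and then asserts that Dijkstra's transcript coincides because Dijkstra outputs the same sequence. You unroll this as an explicit step-by-step induction on $i$, verifying both the extracted vertex and the set of insertions. You also flag a genuine subtlety the paper glosses over: the displayed Dijkstra pseudocode, read literally, would re-insert an already-settled out-neighbor $v$ of $v_i$ (since $\texttt{get-key}(v)$ returns $\bot$ after extraction); the convention you invoke, that settled vertices are never re-inserted, is precisely what is needed and is implicitly assumed by the paper when it states that Dijkstra ``outputs $x_1, x_2, \dots, x_n$.''
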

	\begin{proof}
		Let $\pi = x_1 x_2 \dots x_n$. Let $(Q_0, Q_1, \dots, Q_n)$ be the transcript for topological heapsort on $\AMVS_{G,s}$, $\pi$. Then, by definition, $Q_0 = \{s\} = \{x_1\}$, and $Q_i$ contains precisely those vertices with an in-neighbor in $\{x_1, x_2, \dots, x_i\}$. Considering the fact that Dijkstra's algorithm outputs $x_1, x_2, \dots, x_n$, it is easy to see that its transcript is also $(Q_0, Q_1, \dots, Q_n)$.
	\end{proof}
	
	Say a working-set priority queue executes a certain sequence $\sigma$ of operations \Call{extract-min}{} and \Call{insert}{} in total time $t$. We then define a \emph{decrease-key working-set priority queue} as a priority queue that can execute the same sequence $\sigma$, with $m$ \Call{decrease-key}{} operations inserted anywhere, in time $\fO(t+m)$. In words, \Call{decrease-key}{} takes amortized constant time and does not affect the running time the other operations.
	
	Now take some input $(G,s,w)$ with a unique distance ordering $\pi$. Topological heapsort on $\AMVS_{G,s}$, $\pi$ runs in time $\fO(\log |\perms(\AMVS_{G,s})|)$. Since Dijkstra's algorithm performs no more than $|E(G)|$ \Call{decrease-key}{} operations, its running time is $\fO( |E(G)| + \log |\perms(\AMVS_{G,s})|)$ by \cref{p:dijkstra-transcript}. Finally, \cref{p:do-vs} implies that $\perms(\AMVS_{G,s}) = D_{G,s}$, so Dijkstra's algorithm with a decrease-key working-set priority queue is indeed universally optimal.
	
	\subsection{Working-set properties}\label{sec:wsprops}
	
	We now discuss two relevant variants of the working-set property. For both, the following definition is useful. Fix a sequence of operations on a priority queue $Q$ (from now on, we always assume each element is inserted at most once). We say time $t$ is the time directly after the $t$-th operation.
	If $x$ is contained in $Q$ at time $t$, then the \emph{working set of $x$ at time $t$}, written $W(x,t)$ contains each element $y$ such that (1) $y$ is contained in $Q$ at time $t$, and (2) $y$ was inserted into $Q$ no later than $x$ (in particular, we have $x \in W(x,t)$).
	
	The first variant, the \emph{intermediate working set property}, has been used by Haeupler et al.\ in the first paper on universal optimality of Dijkstra's algorithm~\cite{HaeuplerHlad'ikEtAl2024}, as well as in the initial version of the original topological heapsort paper~\cite{HaeuplerHladikEtAl2024} (both times simply under the name \emph{working set property}). The \emph{working set size} $w(x)$ of an element $x$ is the maximum size of $W(t,x)$ for any time $t$ where $x$ is contained in $Q$. We say $Q$ has the intermediate working set property if $\Call{insert}{}$ and $\Call{decrease-key}{}$ have amortized running time $\fO(1)$, and $\Call{extract-min}{}$ has running time $\fO(1 + \log w(x) )$, where $x$ is the returned element.
	
	Haeupler et al.\ developed a priority queue with the intermediate working set property~\cite{HaeuplerHlad'ikEtAl2024}. Alternatively, Iacono showed that the classical \emph{pairing heaps}~\cite{FredmanEtAl1986} have the strong working set property if \Call{decrease-key}{} is not used (as for topological heapsort).
	
	The second property, called here the \emph{weak working set property}, is defined as follows: Let $w'(x)$ be the size of the \emph{union} of all $W(t,x)$ for any time $t$ where $x$ is in $Q$. In other words, $w'(x)$ is the number of elements inserted into the priority queue while $x$ is present. In the final version of the topological heapsort paper~\cite{HaeuplerHladikEtAl2025a} the proof was changed to only require the weak working set property. By extension, topological heapsort for antimatroids also works with the weak working set property, and the proof above shows that the weak working set property is sufficient for universal optimality of Dijkstra's algorithm as well.
	
	Independently of this work, \UODijSimpPaper{} also showed that Dijkstra's algorithm is universally optimal with a weak working set priority queues (called \emph{timestamp optimality} by them). Note that they additionally showed how to achieve \emph{comparison optimality} with weak working set priority queues, and gave a greatly simplified priority queue implementation that has the weak working set property and supports \Call{decrease-key}{}.

	\section{Decremental simplicial vertices}\label{sec:dec-simplicial}

	We show how to dynamically maintain the set of simplicial vertices in a chordal graph, under simplicial vertex deletions.
	
	We first need a few definitions. Let $G$ be a chordal graph with $n$ vertices and $m$ edges. A clique in $G$ is \emph{maximal} if it is not strictly contained in any other clique.
	Observe that a vertex is simplicial if and only if it is contained in only one maximal clique.
	
	A \emph{clique tree} of $G$ is a tree $T$ such there is a bijection $K$ between $V(T)$ and the set of maximal cliques of $G$, and the following holds: For all $x_1, x_2, x_3 \in V(T)$ such that $x_2$ lies on the path between $x_1$ and $x_2$, we have $K(x_1) \cap K(x_3) \subseteq K(x_2)$.\footnote{Observe that a clique tree is a special type of \emph{tree decomposition} as is used for treewidth~\cite{RobertsonSeymour1986}.} It is known that a clique tree on $G$ can be constructed in $\fO( m + n )$ time, and that $\sum_{x \in V(T)} |K(x)| \in \fO( m + n )$~\cite{BlairPeyton1993,ChandranEtAl2003}.
	
	Our data structure is a clique tree $T$. Each clique maintains a linked list of its vertices, and each vertex $v \in V(G)$ maintains a pointer to some vertex $x \in V(T)$ such that $v \in K(x)$. Additionally, each edge $e = \{x,y\}$ stores the number $s_e = |K(x) \cap K(y)|$, and each vertex $x \in V(T)$ stores the list of incident edges, sorted by $s_e$.  Observe that a given clique tree can be preprocessed as such in $\fO(m+n)$ time.
	
	At the start, we identify each simplicial vertex by checking if it is contained in more than one maximal clique. Now suppose a simplicial vertex $v$ is removed. Then $v$ is contained in only one maximal clique $K(x)$. Remove $v$ from $K(x)$. We now need to check if $K(x)$ is still maximal. If it is not, then there is some neighbor $y$ of $x$ in $T$ such that $K(x) \subseteq K(y)$. This is the case if and only if $s_{\{x,y\}} = |K(x)|$ for some edge $\{x,y\} \in E(T)$. Checking this condition can be done in constant time, since we store all values $s_e$ of incident edges in sorted order.
	
	If $K(x) \subseteq K(y)$, then we contract the edge $\{x,y\}$ in $T$. All necessary updates can be done in $|K(x)|$ time. Observe that the amount of work we do for each vertex removal is proportional to the amount of data we remove from the data structure, hence the total running time is $\fO(m+n)$.
\end{document}